\def\BibTeX{{\rm B\kern-.05em{\sc i\kern-.025em b}\kern-.08em
    T\kern-.1667em\lower.7ex\hbox{E}\kern-.125emX}}
\newcommand{\xmark}{\ding{55}}%
\definecolor{lgray}{rgb}{0.95,0.95,0.95}
\newcommand{\opt}{\textsc{Opt}\xspace}
\newcommand{\rank}{\mathrm{rank}\xspace}
\newcommand{\frank}{\mathrm{frnk}\xspace}
\newcommand{\level}{\ell\xspace}
\newcommand{\levopt}{\ell^\textsc{opt}\xspace}
\newcommand{\OPT}{\textsc{Opt}\xspace}
\newcommand{\RAND}{\textsc{Rand}\xspace}
\newcommand{\RandLong}{\textsc{Random-Push}\xspace}
\newcommand{\DET}{\textsc{Rtr}\xspace}
\newcommand{\DetLong}{\textsc{Rotor-Push}\xspace}
\newcommand{\movehalf}{\textsc{Move-Half}\xspace}
\newcommand{\maxpush}{\textsc{Max-Push}\xspace}
\newcommand{\PD}[2]{\mathrm{PD}({#1},{#2})}
\newcommand{\flip}{\mathrm{flip}}
\newcommand{\el}{d}
\newcommand{\node}{\mathrm{nd}\xspace}
\newcommand{\elem}{\mathrm{el}\xspace}
\newcommand{\E}{\mathbf{E}}
\newcommand{\statopt}{\textsc{Static-Opt}\xspace}
\newcommand{\oblstat}{\textsc{Static-Oblivious}\xspace}
\newcommand{\cred}{c}
\newcommand{\credlev}{w^\text{LEV}}
\newcommand{\credpos}{w^\text{FRNK}}
\newcommand{\re}{e^*}
\newcommand{\iosif}[1]{#1}
\newcommand{\alenex}[1]{\textcolor{black}{#1}}
\newtheorem{theorem}{Theorem}
\newtheorem{lemma}[theorem]{Lemma}
\newtheorem{definition}{Definition}
\newtheorem{observation}{Observation}
\begin{document}

\title{Deterministic Self-Adjusting Tree Networks\\ Using Rotor Walks\footnote{This project received funding by the European Research Council (ERC), grant agreement 864228, 
Horizon 2020, 2020-2025, and by Polish National Science Centre grant 2016/22/E/ST6/00499.}}

\author{Chen Avin$^1$~~ 
Marcin Bienkowski$^2$~~
Iosif Salem$^3$ \\
Robert Sama$^4$~~
Stefan Schmid$^5$~~
Paweł Schmidt$^6$\\~\\

{\small $^1$School of Electrical and Computer Engineering,
Ben Gurion University of the Negev, Israel}\\
{\small \texttt{avin@cse.bgu.ac.il}}\\

{\small $^2$Institute of Computer Science, University of Wroclaw, Poland}\\
{\small \texttt{marcin.bienkowski@cs.uni.wroc.pl}}\\

{\small $^3$Department of Telecommunication Systems, TU Berlin, Germany}\\
{\small \texttt{iosif.salem@inet.tu-berlin.de}}\\

{\small $^4$Faculty of Computer Science, University of Vienna, Austria}\\
{\small \texttt{robert.sama@outlook.com}}\\

{\small $^5$Department of Telecommunication Systems, TU Berlin, Germany \& University of Vienna, Austria}\\
{\small \texttt{stefan.schmid@tu-berlin.de}}\\

{\small $^6$Institute of Computer Science, University of Wroclaw, Poland}\\
{\small \texttt{pawel.schmidt@cs.uni.wroc.pl}}
}

\date{}






\maketitle



\begin{abstract}
We revisit the design of self-adjusting single-source tree networks. The problem
can be seen as a~generalization of the classic list update problem to trees, and
finds applications in reconfigurable datacenter networks. 
We are given a balanced binary tree $T$ connecting $n$ nodes
$V=\{v_1,\ldots,v_n\}$. A~source node $v_0$, attached to the root of the tree,
issues communication requests to nodes in~$V$, in an online and adversarial
manner; the access cost of a request to a node $v$, is given by the current
depth of $v$ in $T$. The online algorithm can try to reduce the
access cost by performing swap operations, with which the position of
a node is exchanged with the position of its parent in the tree; a swap
operation costs one unit. The objective is to design an online algorithm which minimizes the total access cost plus adjustment cost (swapping). Avin et al.~\cite{avin2021dynamically}
(LATIN 2020) recently presented \RandLong, a constant competitive
online algorithm for this problem, based on random walks, together with a
sophisticated analysis exploiting the working set property.

This paper studies analytically and empirically, online algorithms for this problem. In particular, we explore 
how to derandomize \textsc{Random-Push}. In the analytical part, we consider a simple derandomized algorithm which we call \textsc{Rotor-Push},
as its behavior is reminiscent of rotor walks. 
Our first contribution is a~proof that \textsc{Rotor-Push} is constant
competitive: its competitive ratio is 12 and hence by a factor of five lower than the best existing competitive ratio. Interestingly, in contrast to \textsc{Random-Push}, the algorithm
does not feature the working set property, which requires a new analysis. 
We further present a significantly improved and simpler analysis for the randomized algorithm, showing that it is 16-competitive. 

In the empirical part, we compare all self-adjusting
single-source tree networks, using both synthetic and real data. In
particular, we shed light on the extent to which these self-adjusting trees can
exploit temporal and spatial structure in the workload. Our
experimental artefacts and source codes are publicly available.
\end{abstract}


\section{Introduction}

One of the initially studied and fundamental online problems is known as the
\emph{list update problem}: There is a set of elements $E=\{e_1,\ldots,e_n\}$
organized in a linked list where the cost of accessing an element is equal to
its distance from the front of the list. Given a~request sequence of accesses
$\sigma=\sigma^{1}, \sigma^{2}, \ldots$, where $\sigma^{t} = e_i \in E$
denotes that element $e_i$ is requested, the problem is to come up with a
strategy of reordering the list so that the total cost of accesses and reordering is minimized.
The basic reordering operation involves swapping two adjacent elements which
costs one unit. 

The problem is inherently online, that is, decisions of an algorithm have to be
made immediately upon arrival of the request and without the knowledge of future
ones. The efficiency of an algorithm is then analyzed by comparing its cost to
the cost of an optimal offline strategy \OPT, and the ratio of these costs,
called \emph{competitive ratio}, is subject to minimization. Many constant
competitive algorithms are known for the list update problem today, most
prominently the \textsc{Move-To-Front} algorithm~\cite{sleator1985amortized} and
its variants~\cite{Albers03,KamaliSurvey2013,Lopez-OrtizRR20,KamaliSurvey2013,albers1997revisiting,Albers2020,Reingold1994,Munro2000}; all basically moving an~accessed element to (or towards) the
front of the list. The prevalent model in the
literature assumes that the movement of an accessed element towards the list
head is free~\cite{sleator1985amortized}, which however affects the achievable competitive ratios only by
constant factors. 

\noindent \textbf{Tree structure.}
This paper revisits the list update problem but replaces the list with 
a~complete and balanced binary tree. That is, there is an underlying and fixed
structure of $n$~\emph{nodes} forming a complete binary tree $T$ and $n$ elements
$E=\{e_1,\ldots,e_n\}$, where each node has to be occupied exactly by one
element. We denote the node currently holding element~$e$ by $\node(e)$ and the unique element 
stored currently at node $v$ by $\elem(v)$. For any node $v$ we denote its 
\emph{tree level} by $\level(v)$, where the root node has level $0$.

Analogously to the list update problem, the access cost to an element $e$ stored
currently at $v = \node(e)$ is given by $\level(v) + 1$, and at a unit cost it is
possible to swap elements $e_i$ and~$e_j$ occupying adjacent nodes (i.e., 
$\node(e_i)$ is the parent of $\node(e_j)$). Again, the objective is to design
an~online algorithm which minimizes the total cost defined as the cost of all
accesses and swaps.

\noindent \textbf{Reconfigurable optical networks.}
Besides being theoretically interesting as a natural generalization of the list
update problem, such self-adjusting single-source tree structures have recently
gained interest due to their applications in reconfigurable optical networks
\cite{avin2021dynamically}. There, the sequence $\sigma=\sigma^{1},
\sigma^{2}, \ldots$ corresponds to communication requests arriving from a~source node 
which is attached to the root node of the tree. 
These single-source tree networks can be combined to form self-adjusting networks 
which serve multiple sources and whose topology can be an arbitrary
degree-bounded graph~\cite{apocs21renets,disc17}. 
Therefore, the insights gained from analyzing single-source tree networks can assist the design of more efficient self-adjusting networks.

\subsection{Previous results}

A natural idea to design self-adjusting balanced tree networks could be to consider an~immediate generalization of the \textsc{Move-To-Front} strategy: upon a
request to element~$e$, we perform swaps along the path from $\node(e)$ to the root node. 
This moves accessed element $e$ to the root node and pushes all remaining elements on this 
path one level down. However, it is easy to observe~\cite{avin2021dynamically} that this
solution would yield a competitive ratio of $\Omega(\log n / \log \log
n)$\footnote{Note that the ratio of $O(\log n)$ is trivially achievable by an
algorithm that performs no swaps as each access incurs cost at least $1$ to
$\OPT$ and at most $O(\log n)$ (tree depth) to an online algorithm.}: If
$\sigma$ only consists of the elements along the path which are accessed in a
round robin manner, always requesting the leaf entails a cost of $\Theta(\log n)$
to such an online algorithm. In contrast, a feasible strategy for \OPT is to place
all these $\Theta(\log n)$ elements in the first $\Theta(\log\log n)$ levels,
resulting in an access cost of $O(\log \log n)$ per request.

To overcome the problem above, Avin et al.~\cite{avin2021dynamically} proposed a
randomized algorithm \textsc{Random-Push}. In a nutshell, it moves the accessed
element $e$ to the root node, but to make space for it, it chooses a
\emph{random path of nodes} starting at the root node and pushes elements on
this path one level down. More precisely, let $v = \node(e)$ and $d =
\level(v)$. $\textsc{Random-Push}$ chooses a \emph{random} node $v'$ uniformly
on level $d$, which induces a random path of nodes $s_1, s_2, \ldots, s_{d-1},
s_d = v'$, where $s_1$ is the root note. Now for a cycle of nodes $s_1 \to s_2
\to \ldots \to s_{d-1} \to v' \to v \to s_1$, each of the corresponding elements
is moved to the next node on the cycle. (That is, for $i < d$, an 
element $e = \elem(s_i)$ is \emph{pushed down} by one level to a \emph{random}
child of $\node(e)$.) It is easy to observe (cf.~Section~\ref{sec:preliminaries})
that the cyclic-shift of elements can be executed using $O(\level(v))$ swaps of
adjacent elements.


By a careful analysis of working set properties of the algorithm, Avin et
al.~\cite{avin2021dynamically} showed that \textsc{Random-Push} is
$O(1)$-competitive. Specifically, their analysis revolved around the notion of a
Most Recently Used (MRU) tree (where for any two nodes $u$ and $v$, if $u$ 
was accessed more recently than $v$, then it is not further away from the root than $v$).
Such a tree has \emph{the working set property}: the cost of
accessing element $e$ at time $t$ depends logarithmically on the number of
distinct items accessed since the last access of $e$ prior to time $t$,
including $e$. Avin et al.~\cite{avin2021dynamically} showed that the working
set bound is a cost lower bound for any (also offline) algorithm, and proved that
\textsc{Random-Push} approximates (in expectation) 
an MRU tree at any time requiring low swapping costs.


\subsection{Our contribution}

This paper studies whether the \textsc{Random-Push} algorithm can be
derandomized while maintaining the constant competitive ratio. We propose a
natural approach to imitate the random walk executed implicitly by
\textsc{Random-Push} by the following \emph{rotor
walk}~\cite{priezzhev1996eulerian,holroyd2010rotor,cooper2004simulating,akbari2013parallel,cooper-soda-08}.
In our approach, each non-leaf node in the binary tree maintains a two-state
pointer pointing to one of its two children. Whenever an element stored at this
node is pushed down, the direction is according to this pointer and, right after
that, the pointer is toggled, now pointing at the other child node. 

Perhaps surprisingly, it turns out that this algorithm, to which we refer to as
\textsc{Rotor-Push}, has fairly different properties from the algorithm based on
random walks. In particular, unlike \textsc{Random-Push}, an
adversary can fool \textsc{Rotor-Push} so that it does not fulfill the working
set property: using \textsc{Rotor-Push}, the depth of a node can be as high as
linear in its working set size (see Lemma~\ref{lem:rotorWS}, Section \ref{sec:lack}), 
while for \textsc{Random-Push} it was at most logarithmic.

\begin{table*}
\centering
\begin{tabular}{ccccc}
 \textbf{Algorithm} & Access Cost: & Total Cost: & Deterministic  & Competitive Ratio \\
 &  WS Property & WS Bound &   & \\
\hline\hline 
\rowcolor{lgray}
 Random-Push~\cite{avin2021dynamically}  &  \checkmark &  \checkmark &  \xmark &    60, {\color{blue} \textbf{16} (Thm.~\ref{thm:16})}\\
Move-Half~\cite{avin2021dynamically}  & \xmark & \checkmark & \checkmark & 64 \\
\rowcolor{lgray}
Strict-MRU~\cite{avin2021dynamically} & \checkmark & ? & \checkmark & ? \\
{\color{blue} \textbf{Rotor-Push}}  &  {\color{blue} \textbf{\xmark} (Lem.~\ref{lem:rotorWS})} &  {\color{blue} \textbf{?}}  &  {\color{blue} \textbf{\checkmark}} &  {\color{blue} \textbf{12} (Thm.~\ref{thm:12})}\\
\hline
\end{tabular}
\caption{Overview of algorithms and the properties each has (\checkmark) or not (\xmark); question marks refer to open problems. In blue the new results of this paper.}
\label{tbl:summary}
\end{table*}

Despite these differences, we show that the deterministic \textsc{Rotor-Push}
algorithm still achieves a constant competitive ratio. Specifically, we show
that \textsc{Rotor-Push} achieves a~competitive ratio of 12, while the best
known existing competitive ratio was 60 (achieved by \textsc{Random-Push}): 
a~factor of 5 improvement. 
Compared to \textsc{Move-Half}, the currently best deterministic algorithm also presented in~\cite{avin2021dynamically}, the improvement is even larger. 
To derive this result, we present a novel analysis. We
show how to reuse our techniques to provide a~significantly simpler analysis of
the constant-competitive ratio of \textsc{Random-Push}, also improving the
competitive ratio from 60 to 16.

Our second contribution is an empirical study and comparison of self-adjusting
single-source tree networks, using both synthetic and real data. In
particular, we shed light on the extent to which these self-adjusting trees can
exploit temporal and spatial structure in the workload. 
Our experimental artefacts and source codes are publicly available \cite{repo}.

Table~\ref{tbl:summary} summarizes the properties of the different algorithms studied in this paper (details will follow).
In bold blue we highlight our contributions in this paper. 


\subsection{Related work}

Our work considers a generalization of the list access problem to trees.
Previous work on self-adjusting trees primarily focused on binary search trees
(BSTs) such as splay trees \cite{sleator1985self}. In contrast to our model,
self-adjustments in BSTs are based on \emph{rotations} (which are assumed to
have constant cost). While self-adjusting binary search trees such as splay trees
have the working set property,
it is still unknown whether they are constant
competitive. Our model differs from this line of research in that our trees are
not searchable and the working set property implies constant competitiveness, as shown
in~\cite{avin2021dynamically}. 
Non-searchable trees have already been studied in
a model where trees can be changed using rotations, and it is known that
existing lower bounds for (offline) algorithms on BSTs also apply to
rotation-based unordered trees~\cite{fredman2012generalizing}. This
correspondence between ordered and unordered trees however no longer holds under
weaker measures~\cite{iacono2005key}. In contrast to rotation models, the swap
operations considered in our work do not automatically pull subtrees along,
which renders the problem different. 

In previous work, Avin et al. \cite{avin2021dynamically} presented the first constant-competitive
online algorithms for self-adjusting tree networks. In addition to
\textsc{Random-Push} which provides probabilistic guarantees, they also
presented a constant-competitive deterministic algorithm \textsc{Move-Half} (\iosif{cf. Algorithm \ref{alg:movehalf}}) and
introduced the notion of \textsc{Strict-MRU} which stores nodes in MRU order, i.e.,
keeps more recently accessed elements closer to the root.\footnote{The authors called the corresponding
algorithm \maxpush (cf. Algorithm \ref{alg:maxpush}).} While \textsc{Strict-MRU} provides optimal access
costs, it is currently not known how to maintain MRU order deterministically and 
efficiently, i.e., at low
swapping cost. Our paper is motivated by the observation that a rotor walk
approach to derandomize  \textsc{Random-Push} on the one hand provides a simple
and elegant algorithm, but at the same time does not ensure the working set
property. 

Rotor walks have received much attention over the last years and are known under different names, e.g., Eulerian walker~\cite{priezzhev1996eulerian}, edge ant walk~\cite{wagner1999distributed}, 
whirling tour~\cite{dumitriu2003playing}, Propp machines~\cite{holroyd2010rotor}, rotor routers~\cite{landau2009rotor}, or deterministic random walks~\cite{friedrich2010cover}.
Their appeal stems from the remarkable similarity to the expectation of random walks, and their resulting application domains, including load-balancing~\cite{akbari2013parallel}.


\section{Preliminaries}
\label{sec:preliminaries}

We are given a complete binary tree $T$ of $n$ nodes. 
Slightly abusing notation, we use $T$ also 
to denote the set of all tree nodes. 
There is a set $E$ of $n$ elements and an algorithm has to maintain a bijective 
mapping $\node: E \to T$. An inverse of function $\node$ is denoted $\elem$. 

\noindent \textbf{Nodes and levels.}
We denote the tree root by $r_T$. For a node $u$, we denote the subtree rooted
at $u$ by $T[u]$. The levels of $T$ are numbered from~$0$, i.e., the only node
at level $0$ is $r_T$. We denote the maximal level in $T$ by $L_T$. 
We extend the notion of levels to elements, $\level(e) = \level(\node(e))$; note
that the level of a node is fixed, while the level of an element may change as
the algorithm rearranges elements in $T$.

\noindent \textbf{Costs.}
There are two types of costs incurred by any algorithm, when serving a single request:
\begin{itemize}
\item Whenever an element $e$ is accessed, an algorithm pays $\level(e) + 1$.
\item Afterwards, an algorithm may perform an arbitrary number of swaps, each 
of cost $1$ and involving two elements occupying adjacent nodes.
\end{itemize}
\noindent \textbf{Arbitrary swaps.}
Assuming that an algorithm can swap two \emph{arbitrary} adjacent elements at
cost only $1$ is rather controversial: this would require a random access to
arbitrary tree nodes. We resolve this issue by making such swaps possible only
for \OPT\footnote{It is worth noting that the existing analysis of
\RandLong~\cite{avin2021dynamically} explicitly forbids \OPT to make such
arbitrary swaps.} (potentially making it unrealistically strong) and using swaps
only in a limited manner in our algorithms. That is, in a single round, 
whenever we access some element (and pay the corresponding access cost), we 
mark all elements on the access path. Subsequent swaps in this round are allowed only
if one of the swapped nodes is marked; after the swap we mark both involved nodes.

\noindent \textbf{Working set bound and working set property.}
Given a sequence $\sigma=\sigma^1,\sigma^2,\ldots$, the \emph{working set} of an
element $e$ at round $t$ is the set of distinct elements (including~$e$) accessed since the last
access of $e$ before round $t$. We call the size of this working set 
the \emph{rank} of $e$, and denote it as $\rank^{(t)}(e)$. 
We drop superscript $(t)$ when it is clear from 
context. The \emph{working set bound} of sequence
$\sigma$ of $m$ requests is defined as $WS(\sigma) = \sum_{t=1}^{m}
\log(\rank^{(t)}(\sigma^t))$. In~\cite{avin2021dynamically}, the authors proved
that, up to a constant factor, 
the working set bound is a lower bound on the cost of any algorithm, even the optimal one.

We say that a self-adjusting tree has the \emph{working set property} if the cost of
each \emph{access} of an element $v$ is \emph{logarithmic} in the element's rank.
The working set property is hence stricter than the working set bound, which considers
the total cost only. 
Any algorithm with the working set property also has the working set bound (if we ignore swapping
cost) and therefore is constant-competitive 
(this is for instance the case for \textsc{Random-Push}~\cite{avin2021dynamically}). 
However, the working set property does not directly imply the working set bound if we account also for the swapping cost: the implication only holds if the reconfiguration cost is proportional to the access cost.

That said, perhaps surprisingly at first sight, online algorithms can also be optimal without the working set property, as we for example demonstrate with \textsc{Rotor-Push}. 

\noindent \textbf{Augmented push-down operation.}
The following operation will be a main building block of the presented algorithms.

\begin{definition}
Fix a tree level $d$ and two $d$-level nodes $u, v$. The
\emph{augmented push-down} operation $\PD{u}{v}$ rearranges the elements as
follows. Let $r_T=v_0, v_1, \dots, v_{\el-1}, v_\el = v$ be the simple path from
root $r_T$ to $v$. Then, we fix a cycle of nodes: $v_0 \to v_1 \to \dots
\to v_{\el-1} \to v_\el \to u \to v_0$ and for each element at a cycle
node, we move it to the next node of the cycle.\footnote{\iosif{Note that $v_d \to u$ and $u \to v_0$ represent the unique paths between those nodes.}}
\end{definition}

In the next section we show that the augmented push-down operation can be 
implemented effectively, using $O(d)$ swaps.


\section{Algorithms}
\label{sec:algorithms}

This section introduces our randomized and deterministic algorithms. 
To this end, we will apply our augmented push-down operation and 
derive first analytical insights. 

\noindent \textbf{Randomized algorithm.}
We start with the definition of a randomized algorithm \RandLong
(\RAND)~\cite{avin2021dynamically}. Upon a request to a $\el^*$-level element
$\re$, \RandLong chooses node~$v$ uniformly at random among all $\el^*$-level
nodes (including $\node(e^*)$) and rearranges the elements by executing the augmented
push-down operation $\PD{\node(\re)}{v}$.

\noindent \textbf{Rotor pointers.}
The random $\el^*$-level node chosen by \RandLong can be picked as a~result of
$\el^*$ independent left-or-right choices. A natural derandomization of this
approach would be to make these choices completely deterministic, i.e., to
maintain a \emph{rotor pointer} at each non-leaf node, pointing to one of its
children (initially to the left one). 
Informally speaking, we will use such a pointer instead of
a random choice and toggle the pointer right after it has been used. 

In a tree $T$, given a current state of pointers, we define a \emph{global
path}, denoted $P^T$, as the root-to-leaf path obtained by starting at $r_T$ and
following the pointers. We denote the unique $\el$-level node of $P^T$ by
$P_\el^T$. To describe our deterministic algorithm, we define a flip operation that
updates the pointers along the global path.

\begin{definition}[Flip]
Fix a tree level $\el$. The operation
$\flip^T(\el)$ toggles pointers at all nodes $P^T_{\el'}$ for $\el' < \el$.
\end{definition}

\begin{figure*}[t]
    \centering
    \begin{minipage}[c]{0.4\linewidth}
        \tikzstyle{rect}=[fill=white, draw=black, shape=rectangle]

\tikzstyle{arr}=[-{latex}]
\tikzstyle{edge}=[-, draw={gray}]

\tikzset{every node/.style={font=\scriptsize}}

\begin{tikzpicture}[scale=0.75]

    \node [style=rect] (0) at (-5.5, -1) {$e_4$};
    \node [style=rect] (1) at (-6, -2) {$e_8$};
    \node [style=rect] (2) at (-5, -2) {$e_9$};
    \node [style=rect] (3) at (-3.5, -1) {$e_5$};
    \node [style=rect] (4) at (-4, -2) {$e_{10}$};
    \node [style=rect] (5) at (-3, -2) {$e_{11}$};
    \node [style=rect] (6) at (-4.5, 0) {$e_2$};
    \node [style=rect] (7) at (-1.5, -1) {$e_6$};
    \node [style=rect] (8) at (-2, -2) {$e_{12}$};
    \node [style=rect] (9) at (-1, -2) {$e_{13}$};
    \node [style=rect] (10) at (0.5, -1) {$e_7$};
    \node [style=rect] (11) at (0, -2) {$e_{14}$};
    \node [style=rect] (12) at (1, -2) {$e_{15}$};
    \node [style=rect] (13) at (-0.5, 0) {$e_3$};
    \node [style=rect] (15) at (-2.5, 1) {$e_1$};

    \node  (18) at (-6, -2.5) {$0$};
    \node  (19) at (-5, -2.5) {$4$};
    \node  (21) at (-2, -2.5) {$1$};
    \node  (22) at (-4, -2.5) {$2$};
    \node  (24) at (0, -2.5)  {$3$};
    \node  (26) at (-1, -2.5) {$5$};
    \node  (27) at (-3, -2.5) {$6$};
    \node  (28) at (1, -2.5)  {$7$};

    \node  (29) at (-5.5, -1.5) {$0$};
    \node  (30) at (-1.5, -1.5) {$1$};
    \node  (31) at (-3.5, -1.5) {$2$};
    \node  (32) at (0.5, -1.5) {$3$};
    
    \node  (33) at (-4.5, -0.5) {$0$};
    \node  (34) at (-0.5, -0.5) {$1$};
    \node  (35) at (-2.5, 0.5) {$0$};

    \draw [style=arr] (0) to (1);
    \draw [style=edge] (0) to (2);
    \draw [style=arr] (3) to (4);
    \draw [style=edge] (3) to (5);
    \draw [style=arr] (6) to (0);
    \draw [style=edge] (6) to (3);
    \draw [style=arr] (7) to (8);
    \draw [style=edge] (7) to (9);
    \draw [style=arr] (10) to (11);
    \draw [style=edge] (10) to (12);
    \draw [style=arr] (13) to (7);
    \draw [style=edge] (13) to (10);
    \draw [style=arr] (15) to (6);
    \draw [style=edge] (15) to (13);
\end{tikzpicture}
    \end{minipage}
    \qquad
    \begin{minipage}[c]{0.4\linewidth}
        \tikzstyle{rect}=[fill=white, draw=black, shape=rectangle]

\tikzstyle{arr}=[-{latex}]
\tikzstyle{edge}=[-, draw={gray}]

\tikzset{every node/.style={font=\scriptsize}}

\begin{tikzpicture}[scale=0.75]
    \node [style=rect] (0) at (-5.5, -1) {$e_2$};
    \node [style=rect] (1) at (-6, -2) {$e_8$};
    \node [style=rect] (2) at (-5, -2) {$e_9$};
    \node [style=rect] (3) at (-3.5, -1) {$e_5$};
    \node [style=rect] (4) at (-4, -2) {$e_{10}$};
    \node [style=rect] (5) at (-3, -2) {$e_{11}$};
    \node [style=rect] (6) at (-4.5, 0) {$e_1$};
    \node [style=rect] (7) at (-1.5, -1) {$e_4$};
    \node [style=rect] (8) at (-2, -2) {$e_{12}$};
    \node [style=rect] (9) at (-1, -2) {$e_{13}$};
    \node [style=rect] (10) at (0.5, -1) {$e_7$};
    \node [style=rect] (11) at (0, -2) {$e_{14}$};
    \node [style=rect] (12) at (1, -2) {$e_{15}$};
    \node [style=rect] (13) at (-0.5, 0) {$e_3$};
    \node [style=rect] (15) at (-2.5, 1) {$e_6$};

    \node (18) at (-6, -2.5) {$3$};
    \node (19) at (-5, -2.5) {$7$};
    \node (21) at (-2, -2.5) {$0$};
    \node (22) at (-4, -2.5) {$1$};
    \node (24) at (0, -2.5) {$2$};
    \node (26) at (-1, -2.5) {$4$};
    \node (27) at (-3, -2.5) {$5$};
    \node (28) at (1, -2.5) {$6$};
    \node (29) at (-5.5, -1.5) {$3$};
    \node (30) at (-1.5, -1.5) {$0$};
    \node (31) at (-3.5, -1.5) {$1$};
    \node (32) at (0.5, -1.5) {$2$};

    \node (33) at (-4.5, -0.5) {$1$};
    \node (34) at (-0.5, -0.5) {$0$};
    \node (35) at (-2.5, 0.5) {$0$};

    \draw [style=arr] (0) to (1);
    \draw [style=edge] (0) to (2);
    \draw [style=arr] (3) to (4);
    \draw [style=edge] (3) to (5);
    \draw [style=edge] (6) to (0);
    \draw [style=arr] (6) to (3);
    \draw [style=arr] (7) to (8);
    \draw [style=edge] (7) to (9);
    \draw [style=arr] (10) to (11);
    \draw [style=edge] (10) to (12);
    \draw [style=arr] (13) to (7);
    \draw [style=edge] (13) to (10);
    \draw [style=edge] (15) to (6);
    \draw [style=arr] (15) to (13);
\end{tikzpicture}
    \end{minipage}
    \caption{Complete binary tree with rotor initial state of pointers (left)
    and after \DetLong serves a request to element $e_6$ (right). 
    Each node is represented by a rectangle and its label denotes the element
    stored at this node. Arrows represent states of rotor pointers. The number
    below a~node denotes its flip rank: nodes with flip rank $0$ constitute the global
    path. \\
    Serving element $e_6$ induces the following changes: Elements $e_1$ and $e_2$ are moved
    one level down along the global path, $e_4$ is moved to the initial position of $e_6$, and 
    $e_6$ is moved to the root. The states of rotor pointers of the two topmost nodes on the global 
    path are flipped and the nodes' flip ranks are updated accordingly.}
    \label{fig:pd_example}
\end{figure*}
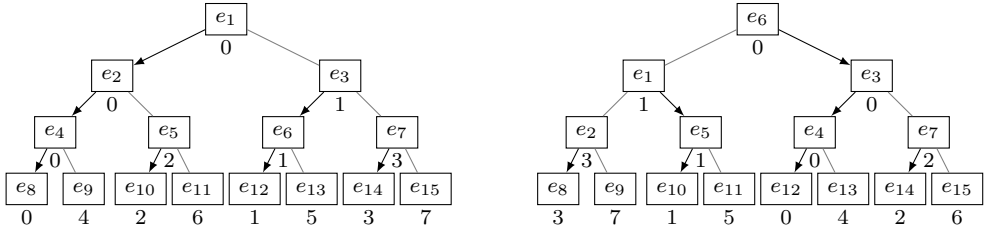

\noindent \textbf{Deterministic algorithm.}
Fix any complete binary tree $T$ with rotor pointers.
Upon a~request to an $\el^*$-level element $\re$, \DetLong (\DET)
fixes node $v = P^T_{\el^*}$ (possibly $v = \node(\re)$) 
and rearranges the elements by executing the augmented push-down operation $\PD{\node(\re)}{v}$. 
Then, it updates nodes' pointers executing $\flip^T(\el^*)$. 
An example tree reorganization performed by \DET is given in \autoref{fig:pd_example}.

\noindent \textbf{Access cost.}
Note that both algorithms ($\RAND$ and $\DET$), upon request to an element~$\re$ 
at level $\el^*$,
execute operation $\PD{\node(\re)}{v}$ for a node $v$ from level $\el^*$. 
Thus, their total cost can be bounded in the same way, 
by adding the access cost $\el^*+1$
to the swap cost of the augmented push-down operation.
The latter operation can be implemented efficiently. 

\begin{lemma}
\label{lem:access-cost}
It is possible to implement both considered algorithms (\RAND and \DET), so that 
they incur cost at most $4\cdot \el^*$ for a request to $\el^*$-level element $\re$.
\end{lemma}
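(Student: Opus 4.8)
The plan is to bound separately the access cost and the number of swaps needed to realize the augmented push-down operation $\PD{\node(\re)}{v}$, where $\re$ is the requested element at level $\el^*$ and $v$ is a node on level $\el^*$ (either chosen at random by \RAND{} or given by the global path in \DET). The access cost is exactly $\el^*+1 \le 2\el^*$ when $\el^* \ge 1$ (and for $\el^*=0$ nothing needs to be done beyond paying $1$, which is a trivial special case); so the whole burden is to show the rearrangement costs at most $2\el^*$ marked swaps. Recall from the marking rule in the preliminaries that, after an access to $\re$, all elements on the access path from $\node(\re)$ to the root are marked, and a swap is legal as long as one endpoint is marked (and both endpoints become marked afterward).

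First I would describe the cyclic shift along the cycle $r_T = v_0 \to v_1 \to \dots \to v_{\el^*-1} \to v_{\el^*}=v \to u \to v_0$, where $u=\node(\re)$. The natural implementation is in two phases. In the first phase, walk $\re$ up from $u$ to the root by swapping it successively with the elements on the access path $u, \mathrm{parent}(u), \dots, r_T$; this is $\el^*$ swaps, each legal because every node on this path is marked. After this phase $\re$ sits at the root and each element formerly at level $i$ on the access path now sits at level $i+1$ — but on the wrong branch in general. In the second phase I need to realize the remaining rotation $v_0 \to v_1 \to \dots \to v_{\el^*}$ among the elements currently occupying $v_0,\dots,v_{\el^*-1}$ together with the element currently at $v_{\el^*}$; equivalently, push the element now at $v_0$ down the path $v_0, v_1, \dots, v_{\el^*}$ by one position each, which is another $\el^*$ swaps along $P^T$ (or the random path). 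The subtlety is that the second-phase path $v_1,\dots,v_{\el^*}$ need not consist of marked nodes, so I have to argue the marking invariant is preserved along the way: after the first phase the root is marked, and each swap in the second phase involves the node just vacated (which was marked by the previous swap, or is the root), so legality propagates down the path. Altogether that is $2\el^*$ swaps plus access cost $\el^*+1 \le 2\el^*$ (for $\el^* \ge 1$), giving the claimed $4\el^*$; the $\el^*=0$ case is immediate since cost is $1 \le 4\cdot 0$ is false, so I would note that the bound is stated for $\el^* \ge 1$, or more carefully that a level-$0$ request costs $1$ and requires no rearrangement, which is absorbed in constants elsewhere — I expect the paper simply intends $\el^*\ge 1$ here.

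Alternatively, and perhaps more cleanly, one can implement the whole augmented push-down with a single downward sweep followed by a single upward sweep that uses only edges incident to already-marked nodes. I would present whichever ordering makes the marking invariant most transparent: process the access path bottom-up to lift $\re$ to the root ($\el^*$ swaps), then process the target path top-down to slide the displaced elements into place ($\el^*$ swaps), checking at each step that one endpoint is marked. I would also recall (as the excerpt promises) that this is the implementation meant when the definition of $\PD{u}{v}$ says it uses $O(\el)$ swaps, so the constant is just being pinned down to $2$.

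The main obstacle is bookkeeping the marking constraint, since the target path $v_1,\dots,v_{\el^*}$ is a priori unmarked except for the root. The key observation resolving it is that marks only grow during a round and a swap marks both its endpoints, so once the root is marked (end of phase one) the marks cascade down whichever path we push along in phase two; hence every swap we perform is legal. Everything else — the count $\el^* + \el^* + (\el^*+1) \le 4\el^*$ — is a one-line arithmetic check once the two phases are in place, and it applies verbatim to both \RAND{} and \DET{} because the only difference between them is how $v$ is chosen, which does not affect the swap count or the marking argument.
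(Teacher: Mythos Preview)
Your two–phase procedure does not realise $\PD{u}{v}$. After phase one (walking $\re$ up from $u=\node(\re)$ to the root) it is the elements on the \emph{$u$-path} that have each been pushed one level down. But in $\PD{u}{v}$ every node on the $u$-path other than $u$ itself must keep its original element; only the $v$-path elements shift down, and $\elem(v)$ must end up at $u$. Your phase two (``push the element now at $v_0$ down the path $v_0,\dots,v_{\el^*}$'') then literally pushes $\re$ --- which now sits at $v_0$ --- away from the root, undoing the one thing phase one got right, and it still leaves the $u$-path wrongly shifted and $\elem(v)$ nowhere near $u$.

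Nor can any reinterpretation save the count $2\el^*$: take $\el^*=2$ with $u$ and $v$ in different subtrees of the root. Then $\elem(v)$ must travel from $v$ to $u$ through the root, a tree distance of~$4$, and along the way it must pass through the sibling $b$ of the root's $v$-child, forcing $\elem(b)$ to leave $b$ and later return. A short case check shows five adjacent swaps are necessary here; four do not suffice. In general one needs roughly $3\el^*$ swaps, not $2\el^*$.

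The paper's implementation differs in two ways. First, it accesses $e=\elem(v)$ rather than $\re$ (so the marked path is the $v$-path). Second, it uses \emph{three} sweeps: move $e$ from $v$ up to the root (this correctly shifts the $v$-path down), then move $e$ from the root down to $u$ (along the $u$-path; this temporarily disturbs the shared prefix but deposits $e$ at $u$ and leaves $\re$ at the parent of $u$), and finally move $\re$ from the parent of $u$ up to the root (this re-shifts the shared prefix and restores the rest of the $u$-path). That is about $\el^*+\el^*+(\el^*-1)$ swaps plus access cost $\el^*+1$, giving the bound $4\el^*$.
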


\begin{proof}
If $\el = 0$, the observation holds trivially, and thus we assume that $\el \geq
1$. Either algorithm executes operation $\PD{u=\node(\re)}{v}$ for a node $v$
from level $\el^*$. Let $e = \elem(v)$. We first access element
$e$ (at cost $\el^*+1$). Then, we move $e$ to the root, swapping $\el^* - 1$ element
pairs on the path from $v$ to $r_T$. 
If $u = v$, then we are done. Otherwise, we move $e$ to node $u$, swapping
$\el^* - 1$ element pairs on the path from $r_T$ to $u$. At this point the element~$\re$ 
occupies the parent node of $u$. It remains to move it to the root,
swapping $\el^* - 2$ element pairs. In total, there are $3 \el^* - 4$ swaps. 
Adding the access cost of $\el^*+1$ yields the lemma.
\end{proof}

\iosif{For completeness we give the pseudocodes of the two remaining single-source tree network algorithms; \movehalf and \maxpush (\textsc{Strict-MRU}) \cite{avin2021dynamically}.}

 \LinesNumbered
 
\begin{algorithm}
\caption{\movehalf}
\label{alg:movehalf}
\textbf{access} $\sigma^t = e_i = el(u)$ along the tree branches\;

let $e_j=el(v)$ be the element with the highest rank at depth $\lfloor \ell(e_i)/2 \rfloor$\;

\textbf{swap} $e_i$ along tree branches to node $v$\; 

\textbf{swap} $e_j$ along tree branches to node $u$\;
\end{algorithm}

\begin{algorithm}
\caption{\maxpush (\textsc{Strict-MRU})}
\label{alg:maxpush}
\textbf{access} $\sigma^t = e$ at depth $k=\ell(e)$\;

move $e$ to the root\;

\For{depth $j=1, 2,\ldots, k-1$}{
move $e_j = \arg\max_{e\in E:\ell(e)=j} rank^{(t)}(e)$, i.e., the least recently used element in level $j$, 
to level $j+1$ to node $\node(e_{j+1})$\;}

move $e_k$ to $\node(e)$\;
\end{algorithm}


\section{Analysis of \DetLong}

We start with structural properties of rotor walks. In particular,
node pointers induce a~specific ordering of nodes on each level, 
which allows us to define their respective \emph{flip-ranks}.
Flip-ranks and levels play a crucial role in the amortized analysis of
\DetLong that we present in subsequent subsections.

\subsection{Flip-Ranks}

We say that 
a~node~$u$ is contained in the global path $P$ if $P_{\level(u)} = u$.

\begin{definition}[Flip-Ranks]
For any state of pointers in $T$ and a $\el$-level node $u$, $\frank^T(u) \geq
0$ is the smallest number of consecutive operations $\flip^T(\el)$ after which
$u$ is contained in~$P^T$. 
\end{definition}

It is easy to observe that when $\flip(\el)$ is executed $2^\el-1$ times, all
nodes of level $\el$ are at some point (i.e., before all flips or after one of
them) contained in $P^T$. That is, flip-ranks of $\el$-level nodes are distinct
numbers from the set $\{0, \dots, 2^\el-1\}$. 
An example of assigned flip-ranks is presented in \autoref{fig:pd_example}.
Furthermore, flip-ranks satisfy the
following recursive definition. (Recall that $T[u]$ is the tree rooted at $u$).

\begin{lemma}
\label{lem:recurrent_ranks}
Fix a tree $T$ and let a node $v$ be a descendant of a node $u$.
Then, $\frank^T(v) = \frank^T(u) + \frank^{T[u]}(v) \cdot 2^{\level(u)}$.
\end{lemma}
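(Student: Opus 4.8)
The plan is to prove the recurrence by carefully unwinding the definition of flip-rank in terms of the flip operation. Fix the tree $T$, a node $u$ at level $\level(u)$, and a descendant $v$ of $u$ at level $\level(v)$. Write $k = \level(u)$ and $d = \level(v)$, so $d > k$. The key observation I would make first is how $\flip^T(d)$ interacts with the pointers strictly above level $k$ versus the pointers inside $T[u]$. A single $\flip^T(d)$ toggles the pointers at all nodes $P^T_{\el'}$ for $\el' < d$; in particular it toggles the $k$ pointers on the global path above level $k$, and — only when $u$ currently lies on $P^T$ — it also toggles pointers inside $T[u]$ along the portion of the global path from $u$ down to level $d-1$. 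The moment $u$ is not on $P^T$, a flip does nothing inside $T[u]$.

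**The main structural claim.** The heart of the argument is the following: consider the sequence of states produced by repeatedly applying $\flip^T(d)$. The node $u$ returns to the global path exactly every $2^k$ flips (this is the length-$2^k$ cycle governing the top $k$ pointers — equivalently, $u$ is on $P^T$ iff the $k$-bit counter formed by the top pointers, read appropriately, encodes $u$, and $\flip^T(d)$ increments a combined counter whose low-order $k$ bits cycle with period $2^k$). Moreover, the pointers inside $T[u]$ only advance on those flips where $u$ was on the path just before the flip. Hence, after $t$ applications of $\flip^T(d)$ to $T$, the number of times the pointers inside $T[u]$ have been advanced by a $\flip^{T[u]}(d - k)$-equivalent operation is $\lfloor (t + a)/2^k \rfloor$ for the appropriate offset $a$ determined by $\frank^T(u)$. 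I would make this precise by letting $a = \frank^T(u)$: by definition, $u$ first appears on $P^T$ after exactly $a$ flips, and thereafter every $2^k$ flips. So $u$ is on $P^T$ after flips numbered $a, a+2^k, a+2\cdot 2^k, \dots$, and the $j$-th advancement of the internal pointers of $T[u]$ occurs at flip number $a + (j-1)\cdot 2^k + 1$ — that is, the internal subtree sees its $j$-th flip operation right after the global path has passed through $u$ for the $j$-th time.

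**Completing the argument.** Now $v$ is contained in $P^T$ precisely when two conditions hold simultaneously: $u$ is contained in $P^T$ (so the global path even enters $T[u]$), and within $T[u]$ the local path $P^{T[u]}$ reaches $v$. By the definition of $\frank^{T[u]}(v)$, the latter happens after exactly $\frank^{T[u]}(v)$ internal flips of $T[u]$. Combining with the count above, $v$ first appears on $P^T$ at the flip numbered $a + \frank^{T[u]}(v)\cdot 2^k = \frank^T(u) + \frank^{T[u]}(v)\cdot 2^{\level(u)}$. Since $\frank^T(v)$ is by definition this smallest number of flips, the recurrence follows. One detail worth checking explicitly: that no $\flip^T(d)$ before the $(\frank^{T[u]}(v)\cdot 2^k + a)$-th one can accidentally place $v$ on $P^T$ — this is immediate because each candidate flip where $u$ is on the path corresponds to a distinct internal-flip count $0, 1, \dots$, and $\frank^{T[u]}(v)$ is by definition the first internal count at which $v$ appears, while flips where $u$ is off the path leave $T[u]$ untouched and thus cannot bring $v$ onto $P^T$ either.

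**Anticipated obstacle.** The main subtlety is formalizing the claim that the internal pointers of $T[u]$ advance by exactly one $\flip^{T[u]}(d-k)$ operation on precisely those global flips where $u$ was on $P^T$, and not on the others, together with verifying the period-$2^k$ recurrence for $u$ itself. I expect the cleanest route is to note that $\flip^T(d)$ acts as "increment" on a mixed-radix counter whose digits are the pointer bits along the current global path, carrying upward; restricted to the top $k$ levels this is an ordinary $2^k$-periodic binary counter, and a carry propagates into level $k$ (hence into $T[u]$, and only then) exactly once per $2^k$ increments. Once this counter picture is established, both the periodicity and the alignment of internal flips with passes through $u$ are immediate, and the stated identity drops out. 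I would keep the write-up at the level of this counter/period argument rather than tracking individual pointer bits.
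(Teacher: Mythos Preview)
Your argument is correct and mirrors the paper's proof: both decompose $\flip^T(\level(v))$ into an internal flip of the subtree rooted at the current level-$\level(u)$ node on the global path followed by $\flip^T(\level(u))$, observe that $u$ lies on the global path with period $2^{\level(u)}$ starting at time $\frank^T(u)$, and count the resulting internal flips of $T[u]$ to reach the stated identity. The carry/counter picture in your final paragraph is slightly imprecise (every global $\flip^T(d)$ toggles some pointer at level $\level(u)$, not only when a ``carry'' arrives; what varies is \emph{which} level-$\level(u)$ subtree is affected), but your main argument does not rely on that analogy.
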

    
\begin{proof}
Observe that executing $\flip^T(\level(v))$
is equivalent to 
finding a node $w = P^T_{\level(u)}$ (on the same level as $u$) and then
\begin{itemize}
    \item executing $\flip^{T[w]}(\level(v)-\level(u))$ and
    \item executing $\flip^T(\level(u))$.
\end{itemize}
We will refer to operation $\flip^T(\level(v))$ simply as \emph{flip}. We now
compute $\frank^T(v)$, i.e., the number of flips after which $P^T$ contains $v$
for the first time. A necessary condition is that $P^T$ must contain its
ancestor $u$: this occurs for the first time after $\frank^T(u)$ flips, and more
generally after $\frank^T(u) + k \cdot 2^{\level(u)}$ flips, where $k \in
\mathbb{N}_{\geq 0}$. At each such time, pointers are toggled \emph{in the
subtree $T[u]$} (i.e., we execute operation
$\flip^{T[u]}(\level(v)-\level(u))$). It takes $\frank^{T[u]}(v)$ such
operations to make path $P^{T[u]}$ contain $v$, and thus the path $P^T$ contains~$v$ 
for the first time after $\frank^T(u) + \frank^{T[u]}(v) \cdot
2^{\level(u)}$ flips.
\end{proof}

\begin{lemma}\label{lem:flips}
Fix any state of pointers in $T$ and an $\el'$-level node $u$.
Fix level $\el$ and execute operation $\flip(\el)$.
\begin{itemize}
\item If $\el' \leq \el$, then the flip-rank of $u$
    becomes $2^{\el}-1$ if it was $0$ and decreases by $1$ otherwise.
\item If $\el' > \el$, then the flip-rank of $u$ can either increase 
    by $2^{\el}-1$ or decrease by $1$.
\end{itemize}
\end{lemma}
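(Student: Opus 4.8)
The plan is to isolate one \emph{base case} and deduce both bullets from it. The base case is the situation $\el'=\el$, i.e.\ the flipped level is the node's own level: there $\flip^T(\el)$ decrements $\frank^T(u)$ cyclically modulo $2^{\el}$, sending it to $\frank^T(u)-1$ when $\frank^T(u)>0$ and to $2^{\el}-1$ when $\frank^T(u)=0$. I would prove this straight from the definition of $\frank$ together with the recorded fact that along any run of $\flip^T(\el)$-operations the nodes $P^T_{\el}$ seen in $2^{\el}$ consecutive steps are exactly the $2^{\el}$ level-$\el$ nodes, each once. Writing $s$ for the current state and $s'=\flip^T(\el)(s)$: if $u$ needs $k\ge1$ further flips to reach $P^T$ from $s$, it needs exactly $k-1$ from $s'$; if $u$ is on $P^T$ in $s$ ($k=0$), then by the recorded fact applied at $s$ it is \emph{off} $P^T$ for the next $2^{\el}-1$ steps, and by the same fact applied at $s'$ it returns after exactly $2^{\el}-1$ further steps. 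So the $\flip^T(\el)$-action on the tuple of level-$\el$ flip-ranks is the $(-1)$-shift on $\mathbb{Z}/2^{\el}\mathbb{Z}$.

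For the first bullet ($\el'\le\el$) I would reduce to the base case via the observation that $\flip^T(\el)$ and $\flip^T(\el')$ act identically on the pointers stored at levels $0,\dots,\el'-1$: both toggle the pointers at $P^T_0,\dots,P^T_{\el'-1}$, and the extra toggles performed by $\flip^T(\el)$ sit at the nodes $P^T_{\el'},\dots,P^T_{\el-1}$, whose levels are $\ge\el'$ and which therefore influence neither $P^T_{\el'}$ nor any subsequent $\flip^T(\el')$-step. Since the flip-rank of an $\el'$-level node is determined by the pointers below level $\el'$ alone, it transforms under $\flip^T(\el)$ exactly as it would under $\flip^T(\el')$, so the base case applies: the flip-rank of $u$ decreases by $1$, wrapping from $0$ to $2^{\el'}-1$ (which coincides with $2^{\el}-1$ in the boundary case $\el'=\el$).

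For the second bullet ($\el'>\el$) I would use Lemma~\ref{lem:recurrent_ranks} on the level-$\el$ ancestor $w$ of $u$, giving $\frank^T(u)=\frank^T(w)+\frank^{T[w]}(u)\cdot 2^{\el}$. The operation $\flip^T(\el)$ toggles only pointers at levels $<\el$, hence leaves $w$'s own pointer and every pointer inside $T[w]$ untouched, so $\frank^{T[w]}(u)$ does not change; meanwhile $w$ sits on the flipped level, so by the base case $\frank^T(w)$ either drops by $1$ (if it was positive) or jumps from $0$ to $2^{\el}-1$. Substituting into the identity, $\frank^T(u)$ drops by $1$ in the first case and grows by exactly $2^{\el}-1$ in the second — which simultaneously pins down the dichotomy: the flip-rank of $u$ increases precisely when $u$'s level-$\el$ ancestor currently lies on $P^T$.

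The one step I expect to need care is the base case — specifically turning the qualitative ``$2^{\el}$ consecutive path nodes are all distinct'' into the exact $(-1)$-cyclic-shift with the wraparound $0\mapsto2^{\el}-1$, which is why I would invoke that fact both at $s$ and at $\flip^T(\el)(s)$. Once that is nailed, the two bullets are essentially one-line consequences: the first from the ``$\flip^T(\el)$ and $\flip^T(\el')$ agree on the top $\el'$ levels'' remark, the second from Lemma~\ref{lem:recurrent_ranks} plus the fact that $\flip^T(\el)$ is blind to everything at levels $\ge\el$.
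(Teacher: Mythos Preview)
Your approach matches the paper's: both handle the first bullet by observing that $\flip^T(\el)$ and $\flip^T(\el')$ act identically on the pointers at levels $<\el'$ (the paper phrases this as ``$\flip(\el)$ is equivalent to $\flip(\el')$ plus toggling pointers at $P_{\el'},\dots,P_{\el-1}$''), and both derive the second bullet from Lemma~\ref{lem:recurrent_ranks} applied to the level-$\el$ ancestor $w$ of $u$, noting that $\flip^T(\el)$ leaves the pointers inside $T[w]$ untouched so that the change in $\frank^T(u)$ equals the change in $\frank^T(w)$.

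You are more explicit than the paper in one place: you isolate and carefully justify the base case $\el'=\el$ (the cyclic $(-1)$-shift) by invoking twice the fact that $2^{\el}$ consecutive $\flip^T(\el)$-states visit every level-$\el$ node exactly once, whereas the paper simply says this ``follows immediately by the definitions of flip-ranks.'' One remark on your wraparound value: you conclude $0\mapsto 2^{\el'}-1$, while the stated lemma reads $2^{\el}-1$. Your version is the correct one --- a level-$\el'$ node has flip-rank in $\{0,\dots,2^{\el'}-1\}$ --- and it is also what the paper's own proof sketch and the downstream Observation~\ref{obs:elemRanksUpdate} actually yield; the $2^{\el}-1$ in the statement appears to be a typo.
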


\begin{proof}
First assume $\el' \leq \el$. Note that the operation 
$\flip(\el)$ is equivalent to operation $\flip(\el')$ and toggling pointers 
of nodes $P_{\el'}, P_{\el'+1}, \ldots, P_{\el-1}$.
Thus, the first property follows immediately by the definitions of flip-ranks.

For the second part of the lemma, let $w$ be the $\el$-level ancestor of $u$.
As the pointers inside subtree $T[w]$ are unaffected by $\flip(\el)$,
$\frank^{T[w]}(v)$ remains unchanged.
Thus, by Lemma~\ref{lem:recurrent_ranks}, the change of $\frank^T(u)$ 
is exactly the same as the change of $\frank^T(w)$; by the previous argument
it can either grow by $2^\el-1$ or decrease by $1$.
\end{proof}


\noindent \textbf{Flip-ranks and Push-Down Operations.}
Finally, we can combine the effects of flip and push-down operations to determine
the way flip-ranks of elements change when \DetLong rearranges its tree.

\begin{observation}
\label{obs:elemRanksUpdate}
When \DetLong rearranges its tree upon seeing a request to an $\el^*$-level
element $\re$, then
\begin{enumerate}
    \item \label{item:path} 
        for all $\el < \el^*$, element $\elem(P_{\el}^T)$ is moved to level
        $\el+1$ and its flip-rank changes from $0$ to~$2^{\el+1}-1$,
    \item \label{item:path_last} 
        if $\elem(P_{\el^*}^T) \neq \re$, then its flip-rank changes 
        from $0$ to $\frank^T(\node(\re))-1$,
    \item \label{item:queried}
        element $\re$ is moved to the root and its flip-rank becomes $0$,
    \item \label{item:remaining}
    other elements remain on their levels, and their flip-ranks 
    may decrease at most by $1$.
\end{enumerate}
\end{observation}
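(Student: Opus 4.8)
The plan is to separate the two actions that \DetLong performs when it serves a request to an $\el^*$-level element $\re$: first the augmented push-down $\PD{\node(\re)}{v}$ with $v=P^T_{\el^*}$, which permutes elements among nodes but changes no rotor pointer, and then $\flip^T(\el^*)$, which re-points some pointers but moves no element. Because of this clean split, for every element $e$ one can let $n_e$ be the node on which $e$ sits after the push-down; then the flip-rank of $e$ at the end of the round is exactly the flip-rank of the \emph{node} $n_e$ after $\flip^T(\el^*)$ has been applied, and by Lemma~\ref{lem:flips} this value is a function only of $\level(n_e)$ and of the flip-rank that $n_e$ had at the start of the round. All nodes that matter below lie on levels $j\le \el^*$, so the only instance of Lemma~\ref{lem:flips} I need is: on a level-$j$ node with $j\le\el^*$, the operation $\flip^T(\el^*)$ turns flip-rank $0$ into $2^{j}-1$ and turns any flip-rank $\rho\ge 1$ into $\rho-1$.

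With this in hand, I would read off from the definition of the augmented push-down where each element lands. Write $P_0=r_T,P_1,\dots,P_{\el^*}=v$ for the top of the global path. The push-down sends $\elem(P_\el)$ to $P_{\el+1}$ for each $\el<\el^*$; it sends $\elem(v)=\elem(P_{\el^*})$ to $\node(\re)$ (or straight to $r_T$ in the degenerate case $\node(\re)=v$, where $\elem(v)=\re$); it sends $\re$ to $r_T$; and it leaves every other element fixed. The nodes $P_0,\dots,P_{\el^*}$ all have flip-rank $0$ at the start of the round, being on the global path, whereas $\node(\re)$ has flip-rank $\frank^T(\node(\re))$, which equals $0$ iff $\node(\re)=v$ and is at least $1$ otherwise.

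Substituting these landing positions and initial flip-ranks into the recorded instance of Lemma~\ref{lem:flips} yields the four items one by one. For $\el<\el^*$, $\elem(P_\el)$ ends on the level-$(\el+1)$ node $P_{\el+1}$, whose flip-rank $0$ becomes $2^{\el+1}-1$ --- item~1. If $\elem(P_{\el^*})\ne\re$, then $\elem(P_{\el^*})$ ends on $\node(\re)$, whose flip-rank $\frank^T(\node(\re))\ge1$ decreases by one --- item~2. Element $\re$ ends on $r_T$, whose flip-rank $0$ becomes $2^0-1=0$ --- item~3. Any element $e$ not touched by the push-down keeps its node $n_e$, hence its level; if $\level(n_e)\le\el^*$ then $\frank^T(n_e)\neq0$, since a flip-rank-$0$ node on a level $\le\el^*$ would be the global-path node $P_{\level(n_e)}$ and hence would have been moved, so its flip-rank drops by exactly $1$; if $\level(n_e)>\el^*$, the second bullet of Lemma~\ref{lem:flips} says its flip-rank either grows by $2^{\el^*}-1$ or drops by $1$; in all cases the decrease is at most $1$ --- item~4. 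There is no real obstacle here beyond two bookkeeping points that must be stated carefully: the decoupling of the round into ``permute elements, then toggle pointers'' (which is what lets Lemma~\ref{lem:flips} be applied verbatim to the landing nodes), and the small observation ruling out a non-moved element sitting on a level-$(\le\el^*)$ node of flip-rank $0$.
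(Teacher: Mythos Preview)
Your argument is correct and matches the approach the paper intends. The paper states this result as an observation without proof, presenting it as an immediate consequence of combining the definition of the augmented push-down with Lemma~\ref{lem:flips}; your write-up is precisely that combination, made explicit. In particular, your clean decomposition of a round into ``permute elements along the cycle, then toggle pointers via $\flip^T(\el^*)$'' is the right way to read the algorithm, and the bookkeeping point that a non-moved element on a level $\le\el^*$ cannot sit at a flip-rank-$0$ node is exactly the small check needed for item~\ref{item:remaining}. One minor remark: your instantiation of Lemma~\ref{lem:flips} as ``a level-$j$ node with $j\le\el^*$ and flip-rank $0$ gets flip-rank $2^{j}-1$'' is the correct statement (and is what item~\ref{item:path} requires), even though the displayed text of Lemma~\ref{lem:flips} reads $2^{\el}-1$ with $\el$ the flip parameter; the proof of that lemma makes clear that the level of the node is meant.
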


\subsection{Credits and Analysis Framework}

From now on, we fix a single complete binary tree $T$. Thus, 
we drop superscript $T$ in notations $P$, $\flip$ and $\frank$ 
as it is clear from the context. While $\level(e)$ denotes the level 
of $e$ in the tree of $\DET$, we use $\levopt(e)$ to denote its level in the tree 
of \OPT.

We define \emph{level-weight} of $e$ as 
\begin{equation}
\label{eq:credlev}
    \credlev(e) = \begin{cases}
        \displaystyle \level(e) - 2 \cdot \levopt(e) - 1 & \text{if $\level(e) \geq 2 \cdot \levopt(e) + 2$}, \\
        0 & \text{otherwise},
    \end{cases}
\end{equation}
and \emph{(flip-)rank-weight} of $e$ as 
\begin{equation}
    \credpos(e) = \begin{cases}
        \displaystyle 1 - \frac{\frank(e)}{2^{\level(e)}} & \text{if $\level(e) \geq 2 \cdot \levopt(e) + 1$}, \\
        0 & \text{otherwise}.
    \end{cases}
\end{equation}
Finally, we fix $f = 4$ and let \emph{credit} of $e$ be
\[
    \cred(e) = f \cdot (\credlev(e) + \credpos(e)) .
\]
As at the beginning trees of \DET and \opt are identical, credits of all
elements are zero. Thus, our goal is to show that at any step the
\emph{amortized cost} of \DET, defined as its actual cost plus the
total change of elements' credits, is at most $O(1)$ times the cost of \opt. 
We do not strive at minimizing the constant hidden in the $O$-notation, but 
rather at the simplicity of the argument.

We split each round into two parts. In the first part, $\opt$ performs an
arbitrary number of swaps, each exchanging positions of two adjacent elements
and pays $1$ for each swap. In the second part, both \DET and \opt access a
queried element and \DET reorganizes its tree according to its definition.
Without loss of generality, we may assume that \opt does not reorganize its tree
in the second stage as it may postpone such changes to the first stage of the
next step.

In the following, we use $\DET$ and $\OPT$ to denote also their costs 
in the respective parts and we use $\Delta \cred(e)$, $\Delta \credlev(e)$,
and $\Delta \credpos(e)$ to denote the 
change in the credit and weights of element $e$ within considered part.


\subsubsection*{Part 1: OPT swaps}

\begin{lemma}
\label{lem:opt_swaps}
For any swap performed by \opt, it holds that $\sum_{e \in E} \Delta \cred(e)
\leq 3 \cdot f \cdot \opt$. 
\end{lemma}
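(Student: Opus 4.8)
The plan is to exploit the very small footprint that a single \opt swap has on the potential. Such a swap exchanges two elements $e_i,e_j$ occupying adjacent nodes \emph{in the tree of \opt}; say $e_i$ sits at the parent of $e_j$. Then $\levopt(e_i)$ increases by $1$, $\levopt(e_j)$ decreases by $1$, and $\levopt(e)$ is unchanged for every other $e$. Crucially, the swap touches neither the tree of \DET nor its rotor pointers, so $\level(e)$ and $\frank(e)$ — and hence also the exponent $2^{\level(e)}$ in the denominator of $\credpos(e)$ — are unaffected for all $e$. Therefore $\Delta\cred(e)=0$ for every $e\notin\{e_i,e_j\}$, and since a single swap costs $\opt = 1$, it suffices to show $\Delta\cred(e_i)+\Delta\cred(e_j)\le 3f$.

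Next I would dispatch the element whose \opt-level grows. Increasing $\levopt(e_i)$ by $1$ only makes both guards ``$\level\ge 2\levopt+2$'' and ``$\level\ge 2\levopt+1$'' harder to satisfy. If $\credlev(e_i)$ stays active, its value drops by exactly $2$; if it deactivates, it drops from a positive value to $0$; if it was inactive, it stays $0$. The same reasoning (with the $\frank$-term literally unchanged) gives $\Delta\credpos(e_i)\le 0$. Hence $\Delta\cred(e_i)\le 0$.

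It remains to bound $\Delta\cred(e_j)=f\,(\Delta\credlev(e_j)+\Delta\credpos(e_j))$, where $\levopt(e_j)$ has decreased by $1$. For $\credlev(e_j)$: if it was already active, the $-2\levopt$ term makes it grow by exactly $2$; if it becomes newly active, it jumps from $0$ to $\level(e_j)-2(\levopt(e_j)-1)-1=\level(e_j)-2\levopt(e_j)+1\le 2$, where the last inequality uses that non-activation before the swap means $\level(e_j)\le 2\levopt(e_j)+1$. So $\Delta\credlev(e_j)\le 2$. For $\credpos(e_j)$: if it was already active it is unchanged (same $\frank$, same $\level$); if it becomes newly active it jumps from $0$ to $1-\frank(e_j)/2^{\level(e_j)}\le 1$ since $\frank(e_j)\ge 0$. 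So $\Delta\credpos(e_j)\le 1$. Adding up, $\Delta\cred(e_j)\le 3f$, and together with $\Delta\cred(e_i)\le 0$ and the vanishing contribution of all other elements we get $\sum_{e\in E}\Delta\cred(e)\le 3f=3\cdot f\cdot\opt$.

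The only delicate point — the nearest thing to an obstacle — is the threshold bookkeeping: one must treat separately the ``weight already active'' and ``weight newly activated'' subcases for both $\credlev$ and $\credpos$, and bound the newly-activated values ($\le 2$ and $\le 1$) precisely by the inequalities that witnessed non-activation just before the swap. One can note in passing that the two maximal increases cannot occur simultaneously (so the true bound is $2f$), but the looser $3f$ keeps the case analysis uniform and is all that is needed downstream.
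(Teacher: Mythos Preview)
Your proposal is correct and follows essentially the same approach as the paper: isolate the two swapped elements, observe that the one whose \opt-level increases has non-positive credit change, and bound the other's increase by $2$ for $\credlev$ and $1$ for $\credpos$. The paper's proof is terser (it simply asserts the bounds $2$ and $1$ without splitting into the active/newly-active subcases), while you spell out the threshold bookkeeping explicitly and also note the sharper $2f$ bound; both are fine.
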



\begin{proof}
Assume that \opt swaps a pair $(e_1, e_2)$, by moving $e_1$ one level down and
$e_2$ one level up. 
The weights associated with $e_1$ can
only decrease, and hence we only upper-bound $\Delta\cred(e_2)$. As $h(e_2)$
decreases by $1$, $\credlev(e_2)$ may grow at most by $2$ and $\credpos(e_2)$
may grow at most by $1$. 
Hence, $\Delta \cred(e_1) + \Delta \cred(e_2) \leq 3 \cdot f$.
This concludes the proof as \opt pays $1$ for the swap.
\end{proof}


\subsubsection*{Part 2: Requests are served}

We fix a requested element $\re$. 
and denote its level in the tree of \DET by $\el^* = \level(\re)$.
For $\el \leq \el^*$, we denote the $\el$-level element on the global path by
$e_\el$, i.e., $e_\el= \elem(P_\el)$. %
Recall that when \DET rearranges its tree, elements $\re, e_0, e_1, \dots,
e_{\el^*}$ change their respective nodes. We define three sets of elements:
$\{\re\}$, $P' = \{ e_0, e_1, \dots, e_{\el^*} \} \setminus \{\re\}$, 
and the set of remaining elements, denoted by $B$.
We first estimate the change in the elements' credits for sets $P'$ and $B$.

\begin{lemma}
\label{lem:e_k_moves}
It holds that $\sum_{e \in P'} \Delta \cred(e) \leq f$.
\end{lemma}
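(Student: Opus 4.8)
The goal is to bound $\sum_{e\in P'}\Delta\cred(e)$, where $P'=\{e_0,e_1,\dots,e_{\el^*}\}\setminus\{\re\}$ consists of the global-path elements that get pushed down one level. By Observation~\ref{obs:elemRanksUpdate}, each such element $e_\el$ (for $\el<\el^*$) moves from level $\el$ to level $\el+1$ and its flip-rank changes from $0$ to $2^{\el+1}-1$; and if $\re\neq e_{\el^*}$, then $e_{\el^*}$ stays at level $\el^*$ but its flip-rank drops from $0$ to $\frank(\node(\re))-1$. I would split $P'$ into these two cases and estimate $\Delta\credlev(e)$ and $\Delta\credpos(e)$ separately for each element, using that OPT's tree is unchanged in this part (so $\levopt(e)$ is fixed).

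For an element $e_\el$ with $\el<\el^*$ pushed from level $\el$ to $\el+1$: its level increases by one, so $\credlev$ can grow by at most $f\cdot 1$ (the increment per unit level is $1$ in \eqref{eq:credlev}). The key point is the rank-weight: since the old flip-rank was $0$, the old $\credpos$ contribution was $1-0/2^\el = 1$ (if the old-level threshold condition held) and at most $1$ in general; the new flip-rank is $2^{\el+1}-1$ at the new level $\el+1$, so the new $\credpos$ value is $1-(2^{\el+1}-1)/2^{\el+1}=2^{-(\el+1)}$, which is tiny. Hence $\Delta\credpos(e_\el)\le 2^{-(\el+1)}-[\text{old value}]$; crucially, when the old value was $1$ this change is negative and roughly $-1$, which is what lets the per-element $+1$ from $\credlev$ be absorbed. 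The delicate bookkeeping is that the "old value" of $\credpos$ could be $0$ if the old threshold $\level\ge 2\levopt+1$ failed at level $\el$ but the new one $\ell+1\ge 2\levopt+1$ holds — this is the case where $\credpos$ genuinely increases (by at most $2^{-(\el+1)}$) rather than decreases, and similarly $\credlev$ only turns on at $\level\ge 2\levopt+2$. I would argue that in every regime the combined change $\Delta\credlev(e_\el)+\Delta\credpos(e_\el)$ is at most a geometrically small quantity like $O(2^{-\el})$, possibly plus a bounded number of "activation" terms.

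Summing over $\el=0,\dots,\el^*-1$, the geometric tail $\sum_\el 2^{-\el}$ is bounded by $2$, and together with the contribution of the (at most one) element $e_{\el^*}$ — whose level does not change, so $\Delta\credlev(e_{\el^*})=0$, and whose flip-rank only decreases, so $\Delta\credpos(e_{\el^*})\le 0$ when active — the total is at most $f$ after multiplying by $f=4$. I would need to be careful that the constant really comes out at or below $f$ and not, say, $2f$; the slack presumably comes from the fact that for each $e_\el$ the level-weight increase of $1$ is strictly dominated by a rank-weight decrease whenever both thresholds are active, and when they are not active the changes are genuinely $O(2^{-\el})$ small. So the main obstacle is the careful case analysis on the threshold conditions in \eqref{eq:credlev} and the definition of $\credpos$ — tracking, for each pushed element, which of the two weights is "on" before and after the move, and verifying that in the one bad regime (weight activation) the increase is geometrically small in $\el$ so the sum telescopes/converges to something at most $1$ before scaling by $f$.
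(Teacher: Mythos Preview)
Your proposal is essentially correct and follows the same approach as the paper: a three-case analysis on $e_\el$ (for $\el<\el^*$) based on where $\el$ sits relative to $2\levopt(e_\el)$, showing $\Delta\cred(e_\el)\le f\cdot 2^{-(\el+1)}$ in each case and then summing the geometric series to get a total below $f$. One minor slip to fix: for $e_{\el^*}$ the flip-rank \emph{increases} from $0$ to $\frank(\node(\re))-1\ge 0$ (it does not ``drop''), and it is precisely this increase that makes $\credpos(e_{\el^*})=1-\frank/2^{\el^*}$ go down, yielding $\Delta\cred(e_{\el^*})\le 0$.
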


\begin{proof}
We first observe that if $e_{\el^*}$ is in the set $P'$, then it must be different
from $\re$. In such a case, it remains on its level, its flip-rank can only grow
(cf.~Case~\ref{item:path_last} of Observation~\ref{obs:elemRanksUpdate}), and thus
$\Delta \cred(e_{\el^*}) \leq 0$. 

In the following, we therefore estimate $\Delta \cred(e_{\el})$ for $\el <
\el^*$. The level of $e_\el$ increases by $1$ and its flip-rank changes from $0$ to
$2^{\el+1}-1$ (cf.~Case~\ref{item:path} of Observation~\ref{obs:elemRanksUpdate}).
We consider three cases.
\begin{itemize}

\item $\el \leq 2 \cdot h(e_\el) - 1$. Both $\credlev(e_\el)$ and $\credpos(e_\el)$
 remain zero, and thus $\Delta \cred(e_\el) = 0$.

\item $\el = 2 \cdot h(e_\el)$. Then, $\credlev(e_\el)$ remains zero, while 
     $\credpos(e_\el)$ increases from $0$ to $1 - (2^{\el+1}-1)/2^{\el+1} = 1/2^{\el+1}$.
     Thus, $\Delta \cred(e_\el) = f / 2^{\el+1}$.

\item $\el \geq 2 \cdot h(e_\el) + 1$. Then, $\credlev(e_\el)$ increases by $1$, while 
     $\credpos(e_\el)$ changes from $1-0/2^{\el} = 1$ to $1 - (2^{\el+1}-1)/2^{\el+1} = 2^{-\el-1}$.
     Thus, $\Delta \cred(e_\el) = f / 2^{\el+1}$. 
\end{itemize}
Summing up, we obtain 
$\sum_{e \in P'} \Delta \cred(e) \leq \sum_{\el=0}^{\el^*} \Delta \cred(e_\el) 
\leq f \cdot \sum_{\el=0}^{\el^*} 1 / 2^{\el+1} < f$.
\end{proof}

\begin{lemma}
\label{lem:single_level_shifts}
It holds that $\sum_{e \in B} \Delta \cred(e) \leq f$.
\end{lemma}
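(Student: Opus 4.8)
The plan is to bound $\Delta\cred(e)$ for each element $e\in B$ individually and then sum over $B$. The starting point is Case~\ref{item:remaining} of Observation~\ref{obs:elemRanksUpdate}: every $e\in B$ stays on its level in the tree of \DET, and since \opt does not reorganize its tree in Part~2 it also keeps its level $\levopt(e)$. Hence $\credlev(e)$ does not change, so $\Delta\credlev(e)=0$, and moreover the branch chosen in the definition of $\credpos(e)$ is the same before and after the request. Thus the only possible source of credit change for an element of $B$ is the term $1-\frank(e)/2^{\level(e)}$, and only for those $e\in B$ with $\level(e)\ge 2\levopt(e)+1$.

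Next I would invoke Case~\ref{item:remaining} of Observation~\ref{obs:elemRanksUpdate} once more, now for the flip-ranks: the flip-rank of $e\in B$ does not decrease by more than $1$, that is $\Delta\frank(e)\ge -1$ (an increase would only decrease $\credpos(e)$ further). This gives the uniform estimate $\Delta\credpos(e)\le 2^{-\level(e)}$ for every $e\in B$ --- trivially when $\level(e)<2\levopt(e)+1$, since then $\credpos(e)\equiv 0$. Since $\cred=f\cdot(\credlev+\credpos)$ and $\Delta\credlev(e)=0$, it remains to prove the single inequality $\sum_{e\in B:\,\level(e)\ge 2\levopt(e)+1} 2^{-\level(e)} \le 1$.

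The crux is this last geometric estimate, and the idea is to charge it against the tree of \opt rather than that of \DET: a naive per-level count in the tree of \DET does not converge, since a level holding $2^k$ elements contributes $2^k\cdot 2^{-k}=1$ and there are up to $L_T+1$ levels. Instead, the constraint $\level(e)\ge 2\levopt(e)+1$ forces $2^{-\level(e)}\le \tfrac12\cdot 4^{-\levopt(e)}$; grouping the relevant elements by their \opt-level $j$ and using that the (complete binary) tree of \opt has at most $2^j$ nodes on level $j$, the sum is bounded by $\tfrac12\sum_{j\ge 0}2^j4^{-j}=\tfrac12\sum_{j\ge 0}2^{-j}=1$. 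I expect the only delicate point to be the bookkeeping around Part~2: one must check that the inequality $\level(e)\ge 2\levopt(e)+1$ is genuinely stable, so that the definition of $\credpos$ does not switch branches, and that ``flip-rank decreases by at most $1$'' is applied to the elements of $B$ and not merely to the nodes they occupy --- these coincide precisely because elements of $B$ do not move, which is where the structural facts about rotor walks collected in Observation~\ref{obs:elemRanksUpdate} are actually used.
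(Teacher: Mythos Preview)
Your proposal is correct and follows essentially the same approach as the paper: both arguments observe that $\credlev$ is unchanged for $e\in B$, bound $\Delta\credpos(e)\le 2^{-\level(e)}$ via Case~\ref{item:remaining} of Observation~\ref{obs:elemRanksUpdate}, and then group the relevant elements by their level in \opt's tree, using $2^{-\level(e)}\le 2^{-2\levopt(e)-1}$ and $|B_h|\le 2^h$ to obtain the geometric sum $\sum_{h\ge 0}2^{-h-1}\le 1$. Your remark that a naive per-level count in \DET's tree would fail is exactly the reason the paper (and you) charge against \opt's tree instead.
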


\begin{proof}
The node mapping of elements from $B$ remain intact, and thus their
level-weights are unaffected. However, their flip-ranks may change, although by
Observation~\ref{obs:elemRanksUpdate} (Case~\ref{item:remaining}) they may decrease at
most by $1$.

Fix any level $h \geq 0$ and let $B_h$ be the set of elements of $B$ on level
$h$ in the tree of \opt. For an element $e \in B_h$, if $\level(e) \leq 2h$, then the 
flip-rank-weight of $e$ remains $0$. If, however, $\level(e) \geq 2h + 1$,
then the flip-rank of $e$ decreases at most by $1$, and thus its flip-rank-weight
increases at most by $2^{-\level(e)}$. In total,
$ \sum_{e\in B_h} \Delta\credpos(e) 
 = \sum_{e\in B_h: \level(e) \geq 2h+1} 2^{-\level(e)}
 \leq \sum_{e\in B_h} 2^{-2h-1}
 \leq 2^{-h-1}.
$
The last inequality follows as $|B_h| \leq 2^h$.
Summing the above bound over all levels, we obtain 
$
 \sum_{e\in B} \Delta\credpos(e) 
 = \sum_{h=0}^{L_T} \sum_{e\in B_h} \Delta\credpos(e) 
 \leq \sum_{h=0}^{L_T} 2^{-h-1}
 < \sum_{h=0}^{\infty} 2^{-h-1}
 = 1.
$ 
Therefore, $\sum_{e \in B} \Delta\cred(e)
= f \cdot \sum_{e \in B} \Delta\credpos(e) \leq f$.
\end{proof}

\subsubsection*{Main Result}

\begin{theorem}\label{thm:12}
\DetLong is 12-competitive.
\end{theorem}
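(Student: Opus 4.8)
The plan is to combine the three credit-change lemmas already proved (Lemmas~\ref{lem:e_k_moves} and~\ref{lem:single_level_shifts} for Part~2, and Lemma~\ref{lem:opt_swaps} for Part~1) with the access-cost bound of Lemma~\ref{lem:access-cost}, and to add one missing ingredient: a lower bound on the credit \emph{decrease} caused by moving the requested element $\re$ up to the root. The amortized cost of \DET in a single round is its actual cost plus the total change $\sum_{e\in E}\Delta\cred(e)$, and the goal is to show this is at most $12$ times \OPT's cost in that round. I would handle the two parts of a round separately, exactly as the framework sets up.

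First I would dispose of Part~1. By Lemma~\ref{lem:opt_swaps}, each \OPT swap increases $\sum_e \Delta\cred(e)$ by at most $3f$ while \OPT pays $1$; since \DET does nothing in Part~1, its amortized cost over all Part~1 swaps is at most $3f \cdot \OPT = 12 \cdot \OPT$ (recall $f=4$). So Part~1 is already consistent with the claimed ratio and contributes nothing further to worry about. The real work is Part~2, where \DET serves the request and \OPT pays only $\el^{\textsc{opt}}(\re)+1 = \levopt(\re)+1$ for the access.

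For Part~2, by Lemma~\ref{lem:access-cost} \DET's actual cost is at most $4\el^*$, and the credit changes of the sets $P'$ and $B$ contribute at most $f + f = 2f = 8$ by Lemmas~\ref{lem:e_k_moves} and~\ref{lem:single_level_shifts}. What remains is $\Delta\cred(\re)$. By Observation~\ref{obs:elemRanksUpdate}, $\re$ moves to the root, so afterward $\level(\re)=0$, $\frank(\re)=0$, hence $\credlev(\re)=0$ and $\credpos(\re)$ is either $1$ (if $0 \ge 2\levopt(\re)+1$, i.e.\ $\levopt(\re)=0$) or $0$ otherwise; in all cases the new credit of $\re$ is at most $f$. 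The key estimate is a lower bound on the \emph{old} credit $\cred(\re)$ before the move: if $\el^* \ge 2\levopt(\re)+2$ then $\credlev(\re) = \el^* - 2\levopt(\re) - 1 \ge \el^*/2 - 1$ roughly (I would track the exact constant), so $\cred(\re) \ge f\,(\el^*/2 - \text{const})$, giving $\Delta\cred(\re) = c_{\text{new}}(\re) - c_{\text{old}}(\re) \le f - f(\el^*/2 - \text{const})$, a quantity that is negative and of order $-f\el^*/2 = -2\el^*$ for large $\el^*$. This negative term is what cancels half of \DET's actual cost $4\el^*$. In the complementary regime $\el^* \le 2\levopt(\re)+1$, the old credit may be small, but then $\el^*$ itself is at most $2\levopt(\re)+1$, so \DET's actual cost $4\el^* \le 8\levopt(\re)+4$ is already $O(1)$ times \OPT's access cost $\levopt(\re)+1$, and the loose positive credit terms are absorbed into the same bound. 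Putting the two regimes together: amortized cost in Part~2 $\le 4\el^* + \Delta\cred(\re) + 2f \le$ (in the deep case) $4\el^* - 2\el^* + O(1) + 8 = 2\el^* + O(1)$, which then has to be rebounded against $\levopt(\re)+1$ — and this is where I expect the constant $12$ to be tight, so the bookkeeping of the additive constants against the multiplicative factor is the delicate point. I would set up the case split on $\el^*$ versus $2\levopt(\re)+2$ carefully so that in the deep case the surviving $2\el^*$ is itself bounded by a constant times $\levopt(\re)+1$ via $\el^* < 2\levopt(\re)+2$ being \emph{false} there — wait, that is the wrong direction, so in fact in the deep case one uses the negative credit term to kill \emph{all} of $4\el^*$ down to $O(\levopt(\re))$ by choosing the $f=4$ coefficient precisely, which is exactly why $f=4$ was fixed.

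The main obstacle, then, is not any single hard lemma — all the structural work is done — but the \emph{constant-chasing}: verifying that with $f=4$ the negative $\Delta\cred(\re)$ term in the deep regime exactly compensates \DET's $4\el^*$ access-plus-swap cost down to a linear-in-$\levopt(\re)$ residual, and that the leftover additive constants from $P'$, $B$, and the shallow regime, together with the Part~1 contribution of $3f\cdot\OPT = 12\cdot\OPT$, all fit under the clean bound $\DET \le 12\cdot\OPT$ when summed over the whole request sequence (using that credits start and stay nonnegative, so telescoping the amortized inequality yields $\DET \le 12\cdot\OPT + (\text{initial credit} = 0)$). I would therefore write the proof as: (i) recall the amortized framework and nonnegativity of credits; (ii) Part~1 via Lemma~\ref{lem:opt_swaps}; (iii) Part~2, splitting on whether $\el^* \ge 2\levopt(\re)+2$, using Lemmas~\ref{lem:access-cost},~\ref{lem:e_k_moves},~\ref{lem:single_level_shifts} and the explicit computation of $\Delta\cred(\re)$; (iv) sum and conclude.
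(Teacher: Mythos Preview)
Your proposal is correct and follows essentially the same route as the paper: split each round into \OPT-swap and serve parts, handle Part~1 via Lemma~\ref{lem:opt_swaps}, and in Part~2 combine Lemmas~\ref{lem:access-cost}, \ref{lem:e_k_moves}, \ref{lem:single_level_shifts} with a case split on $\el^* \ge 2\levopt(\re)+2$ for $\Delta\cred(\re)$, using $f=4$ so that the released credit $f(\el^*-2\levopt(\re)-1)$ exactly cancels the $4\el^*$ cost down to $8\levopt(\re)+4$. One small slip: after $\re$ moves to the root its new credit is always exactly $0$ (the condition $0 \ge 2\levopt(\re)+1$ never holds), not ``at most $f$'', which only makes your bound cleaner.
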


\begin{proof}
It is sufficient to show that within either part of a single round, $\DET +
\sum_{e \in E} \Delta \cred(e) \leq 12 \cdot \opt$. The theorem follows then by
summing this relation over all rounds, and observing that credits are zero
initially.

In the first part, when \OPT performs its swaps, the relation holds by
Lemma~\ref{lem:opt_swaps} as in this case $\DET + \sum_{e \in E} \Delta \cred(e)
\leq 0 + 3 \cdot f \cdot \OPT = 12 \cdot \OPT$.  

In the rest of the proof, we focus on the second part of the round.
By Lemma~\ref{lem:e_k_moves} 
and Lemma~\ref{lem:single_level_shifts}, 
the amortized cost of \DET in this part can be upper-bounded by 
\begin{align}
\nonumber
    \DET + \sum_{e \in E} \Delta\cred(e) 
    & \leq \DET + \Delta\cred(e^*) + \sum_{e \in P'} \Delta\cred(e) + \sum_{e \in B} \Delta\cred(e) \\
    & \leq \DET + \Delta\cred(e^*) + 2\cdot f.
    \label{eq:det_competitive}
\end{align}
It remains to bound $\DET + \Delta \cred(\re)$. 
To this end, let $h^* = \levopt(\re)$ be the level of $\re$ in the tree of $\OPT$.
By Lemma~\ref{lem:access-cost},
the cost of \DET is at most $4 \cdot \el^*$.
We consider two cases.
\begin{itemize}

\item $\el^* \leq 2 \cdot h^* + 1$. Then, the initial and the final credit of $\re$ is zero, 
and thus 
$\DET + \Delta\cred(\re) = 4 \cdot d^* \leq 8 \cdot h^* + 4$.

\item $\el^* \geq 2 \cdot h^* + 2$. The initial credit of $\re$ is
$\cred(\re) \geq f \cdot \credlev(\re) = (\el^* - 2 \cdot h^* - 1) \cdot f$ and
the final credit of $\re$ is zero. Thus, using $f = 4$, we obtain
$\DET + \Delta\cred(\re) \leq 4 \cdot \el^* - f \cdot \el^* + 2 \cdot  f \cdot h^* + f 
= 8 \cdot h^* + 4$.
\end{itemize}

Plugging the relation $\DET + \Delta\cred(\re) \leq 8 \cdot h^* + 4$ to
\eqref{eq:det_competitive}, using that the cost of \opt is $h^*+1$ and $f = 4$,
we obtain $\DET + \sum_{e \in E} \Delta\cred(e) \leq 8 \cdot h^* + 4 + 2 \cdot f
\leq 12 \cdot (h^* + 1) = 12 \cdot \OPT$.
\end{proof}


\subsection{On the Lack of Working Set Property}\label{sec:lack}

The next Lemma shows formally that the \DetLong does not maintain the working set property. This was first observed informally in \cite{techreport}.

\begin{lemma}
\label{lem:rotorWS}
\DetLong does not guarantee the working set property. The access cost of an element can be linear in its working set size.
\end{lemma}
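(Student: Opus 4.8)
The plan is to exhibit an explicit request sequence on which \DetLong keeps some element at a level that is linear in its rank, i.e. exponentially worse than the logarithmic guarantee required by the working set property. The natural candidate is a sequence that repeatedly ``refills'' the global path with elements that are destined to be pushed far down. Concretely, I would start from the initial rotor state (all pointers to the left child) and consider requests that always target a small fixed set of elements near the root --- say, cyclically requesting the two or three elements $e_1, e_2, e_3$ occupying the topmost nodes. Each such request triggers an augmented push-down $\PD{\node(\re)}{v}$ with $v = P^T_{\el^*}$ and then a $\flip$ along the global path; by Observation~\ref{obs:elemRanksUpdate}, the element currently sitting at each node $P^T_\el$ with $\el < \el^*$ is pushed one level down and lands with flip-rank $2^{\el+1}-1$, which is exactly the \emph{last} position to ever return to the global path again under subsequent flips of that depth. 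So an element that gets pushed down by a request to a low-level element will not come back to the root region for a long time, even though the working set has stayed tiny (size $O(1)$).

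The key step is then to track one ``victim'' element --- pick the element $v$ that initially occupies the root's child, say $e_2$ --- and argue that after it is pushed down once, the adversary can keep feeding requests to the $O(1)$ elements that currently occupy the shallow nodes, each time pushing whatever sits on the global path one more level down, so that $v$ monotonically (or near-monotonically) descends. I would set up an invariant along the lines of: after a suitable block of requests, the element that was pushed from level $\el$ to level $\el+1$ has flip-rank $2^{\el+1}-1$, and the number of flips of depth $\el+1$ needed to bring it back is $\Theta(2^{\el})$, which the adversary simply never performs because it keeps the active level shallow. Meanwhile the rank of $v$ is at most the (constant) number of distinct elements the adversary ever touches plus one, while $\level(v)$ grows until it reaches $\Theta(L_T) = \Theta(\log n)$ --- giving an access cost, once $v$ is finally requested, of $\Theta(\log n)$ against a working set size of $O(1)$, i.e. access cost $\Omega(2^{\rank(v)})$-type blow-up, which is the claimed linear-in-rank (indeed, the lemma only asks for linear, so even $\rank(v) = O(1)$ versus $\level(v) = \Omega(\log n)$ suffices; a cleaner statement picks a sequence where $\rank(v) = \Theta(k)$ and $\level(v) = \Theta(2^k)$ is impossible since $\level \le \log n$, so I will instead phrase it as: there is a sequence and an element whose access cost at some time is $\Omega(\level(v))$ while its rank is $O(1)$, hence not $O(\log \rank)$, and scale $n$).

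The main obstacle is bookkeeping the interaction between the push-down cycle and the flip operation: when we request a shallow element $\re$ at level $\el^*$, the push-down moves elements $e_0,\dots,e_{\el^*-1}$ down the \emph{global} path, but the global path itself changes as flip-ranks update, so I need to be careful that the victim $v$ actually stays on (or re-enters) the global path exactly when I want it pushed, and stays off it when I don't. I expect the cleanest way around this is to exploit Lemma~\ref{lem:recurrent_ranks} and Lemma~\ref{lem:flips}: choose the request sequence so that the active depth $\el^*$ is always strictly less than $\level(v)$, so by the second bullet of Lemma~\ref{lem:flips} the flip-rank of $v$ never decreases to $0$ on its own; the only way $v$ moves is when it \emph{is} on the global path at a moment a deeper element is requested, and I will arrange a short gadget (a constant-length cycle of requests to the current shallow occupants) that forces $v$ onto the global path once, pushes it down one level, and resets the shallow configuration so the gadget can be repeated. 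Iterating the gadget $\Theta(\log n)$ times sends $v$ to a leaf while only ever touching $O(1)$ distinct elements, and one final request to $v$ witnesses the violation.
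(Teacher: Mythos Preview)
Your gadget has a concrete gap: requesting elements that sit at levels shallower than the victim $v$ cannot push $v$ down. By Observation~\ref{obs:elemRanksUpdate}, a request at level $\el^*$ moves only the elements at $P^T_0,\dots,P^T_{\el^*-1}$ one level down; everything at level $\ge\el^*$ keeps its level. So if you cyclically request two or three elements occupying the top nodes (levels $0,1,1$), they simply permute among those three nodes forever and nobody ever reaches level $2$. You notice mid-proposal that a \emph{deeper} request is needed (``the only way $v$ moves is when \dots a deeper element is requested''), but your gadget (``requests to the current shallow occupants'') never supplies one, and the two paragraphs contradict each other.

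The $O(1)$-distinct-element plan is in fact unsalvageable, not just under-specified. Let $R'$ be the elements requested between two accesses of $v$ and let $M_t=\max_{r\in R'}\level(r)$. A~request to $r_i\in R'$ at level $\el^*\le M_t$ sends $r_i$ to the root, and any other $r_j$ can only land at a level $\le\el^*\le M_t$; hence $M_t$ is non-increasing. But the push of $v$ from level $\ell$ to $\ell+1$ needs a request at level $>\ell$, and immediately afterwards $v$ has flip-rank $2^{\ell+1}-1$, so $\Omega(2^{\ell})$ further requests are needed before $v$ is back on the global path. During all of this you must keep some element of the bounded set $R'$ at level $>\ell+1$, yet each deep request burns one such element by sending it to the root, and regenerating depth requires still-deeper requests. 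Try it with $|R'|=2$: after two requests the victim is stuck at level~$1$ regardless of how deep the two elements start. So ``rank $O(1)$ versus cost $\Theta(\log n)$'' is not attainable --- and note that the lemma literally claims cost \emph{linear} in the rank, i.e.\ $\Theta(\rank)$, not super-linear.

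The paper's construction accordingly uses $\Theta(\log n)$ distinct elements, not $O(1)$. It fixes the node set $S$ consisting of the root together with the two leftmost nodes at every level (so $|S|=2x-1$ in a depth-$x$ tree) and always requests the element at the \emph{deepest} node of $S\cap P^T$. Because both children of the leftmost node at each level lie in $S$, the set $S\cap P^T$ is always a prefix of the global path; hence every element moved by the push-down stays inside $S$, and the working set never exceeds $2x-1$. On the other hand the requested element may sit at depth up to $x-1$, and one argues that the first-requested element eventually cycles all the way back down to that depth and is requested again. This gives access cost $x$ against working set size $\Theta(x)$ --- exactly linear, as claimed.
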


\begin{proof}
We construct a sequence $\sigma$ of requests for which at some times the access cost in \DetLong will be linear in the working set size of the requested element. 
Consider a complete binary tree $T$ of size $2^x-1$ and $x$ levels, $0 \le \ell \le x-1$. Initially all pointers points to the left. Let $S$ be the set of nodes consisting of the root and the two left most nodes in each level. Clearly $|S|=2x-1$. 
We construct $\sigma$ by requesting only elements hosted by nodes in $S$. 
At each time the next request is to $\elem(v)$ where $v$ both in $S$ and $P^T$ and $\level(v)$ is the maximum possible. Formally $\el^* = \arg\max_{\el} P_\el^T \in S$ and $v=P_{\el^*}^T$.

Note that all elements that move during a request are moving between nodes in $S$. Therefore, the working set size is at most $2x-1$ for each request. 
The first request in the sequence is to element $e=\elem(P_{x-1}^T)$ and $e$ is moved to the root. It is not hard to verify
that for each level $\ell=\level(e) < x-1$ after a finite time $e$ will be pushed to level $\ell+1$.
Therefore after a finite time $e$ will reach level $x-1$ and will be requested again.
At that point the access cost will be $x$ while the working set property require a cost 
of $O(\log(2x-1))$.
%
\end{proof}






\section{Improved Analysis of \RandLong}
\label{sec:randompush}

In this section, we present a greatly simplified analysis of 
the algorithm \RandLong (\RAND)~\cite{avin2021dynamically},
showing that it is $O(1)$-competitive. 

We reuse the notation for the argument for \DetLong. 
We define level-weight of element $e$ as for \DetLong (see~\eqref{eq:credlev}). 
This time, however, we do not use flip-rank-weights, but we define 
the credit of element $e$ as 
$c(e) = f_R \cdot \credlev(e)$,
where $f_R = 8$.
We split the analysis of a single round, where an element $\re$ is requested,
again into two parts, where the swaps of \OPT are performed only in the former 
part. 

The proof for the following bound is analogous to Lemma~\ref{lem:opt_swaps},
but we get a slightly better bound as we need to analyze the growth of level-weights only. 

\begin{lemma}
\label{lem:randSwap}
For any swap performed by \opt, it holds that $\sum_{e \in E} \Delta \cred(e)
\leq 2 \cdot f_R \cdot \opt$. 
\end{lemma}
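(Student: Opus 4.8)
The plan is to mirror the proof of Lemma~\ref{lem:opt_swaps}, but to track only the level-weights, since for \RAND the credit is $\cred(e) = f_R \cdot \credlev(e)$ with no flip-rank term. First I would fix a single swap of \opt, which by definition moves one element $e_1$ one level down and another element $e_2$ one level up in the tree of \opt, while leaving the tree of \RAND (and hence all levels $\level(\cdot)$) untouched; \opt pays exactly $1$ for this swap.

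Next I would observe that, since $\levopt(e_1)$ increases by one while $\level(e_1)$ is unchanged, the quantity $\level(e_1) - 2\cdot\levopt(e_1) - 1$ strictly decreases, so $\credlev(e_1)$ can only decrease (it stays at $0$ if it was already $0$, or drops, possibly hitting the $0$ floor). Hence $\Delta\credlev(e_1) \le 0$, and it suffices to upper-bound $\Delta\credlev(e_2)$. For $e_2$, the level $\levopt(e_2)$ decreases by one, so the expression $\level(e_2) - 2\cdot\levopt(e_2) - 1$ increases by exactly $2$; therefore $\credlev(e_2)$ grows by at most $2$ (it grows by at most $2$ whether it was previously positive or clamped at $0$, since the piecewise-max with $0$ can only shrink the increment). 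Consequently $\Delta\cred(e_1) + \Delta\cred(e_2) = f_R\cdot(\Delta\credlev(e_1) + \Delta\credlev(e_2)) \le f_R \cdot 2 = 2 f_R$. Since no other element changes level in either tree, $\sum_{e\in E}\Delta\cred(e) \le 2 f_R$, and as \opt paid $1$ we get $\sum_{e\in E}\Delta\cred(e) \le 2\cdot f_R\cdot\opt$, as claimed.

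There is essentially no obstacle here; the only point requiring a line of care is confirming that the clamping at $0$ in the definition~\eqref{eq:credlev} of $\credlev$ never causes the increment for $e_2$ to exceed $2$ — which holds because $\max(0,x+2) - \max(0,x) \le 2$ for all real $x$ — and, symmetrically, that it never makes $\credlev(e_1)$ increase. I would phrase the argument so it reads almost verbatim like the proof of Lemma~\ref{lem:opt_swaps}, simply noting that the absence of the flip-rank-weight term is exactly what improves the constant from $3$ to $2$.
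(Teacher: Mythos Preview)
Your proposal is correct and matches the paper's approach exactly: the paper does not even spell out a separate proof, merely remarking that the argument is analogous to Lemma~\ref{lem:opt_swaps} with the improved constant coming from the absence of the flip-rank-weight term. Your write-up is in fact more detailed than what the paper provides.
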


Throughout the rest of the proof, 
we fix a single requested element $\re$ and denote its level by $\el^*$. Our goal is to
prove that in the considered round
\begin{equation}
    \label{eq:randServiceCost}
    \textstyle \E[\RAND] + \E [\sum_{e\in E} \Delta \cred(e)] \leq 16 \cdot \opt.
\end{equation}
where the expected value is taken over random choices of an algorithm 
from the beginning of an input till the current round (inclusively).

Let $E' = E \setminus \{\re\}$. We first focus on the expected change of credits
in $E'$. 

\begin{lemma}
\label{lem:randPush}
It holds that $\E[\sum_{e \in E'} \Delta \cred(e)] \leq (\el^*/2 + 1) \cdot f_R$ 
\end{lemma}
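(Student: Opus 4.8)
The plan is to partition $E'$ exactly as in the analysis of \DetLong: the set $P'$ of non-requested elements that lie on the random root-to-leaf path chosen by \RAND (there are $\el^*$ or $\el^*+1$ of them, one per level $0,1,\dots,\el^*$, minus possibly $\re$), and the set $B$ of all remaining elements, whose node mapping is untouched. Since \RAND uses only level-weights in its potential, elements in $B$ never change their credit at all — their levels are fixed and $\levopt$ is fixed in this part — so $\sum_{e\in B}\Delta\cred(e)=0$ deterministically. Hence the entire contribution comes from $P'$, and it suffices to show $\E[\sum_{e\in P'}\Delta\cred(e)]\le(\el^*/2+1)\cdot f_R$.

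Next I would analyze a single level $\el<\el^*$. The element $e_\el$ sitting on the random path at level $\el$ is pushed down to level $\el+1$, so $\level(e_\el)$ increases by exactly $1$; its $\levopt$ is unchanged. By the definition of $\credlev$ in~\eqref{eq:credlev}, increasing the level by one increases $\credlev(e_\el)$ by at most $1$ (it stays $0$ while $\level(e_\el)+1<2\levopt(e_\el)+2$, and otherwise grows by exactly $1$). Thus $\Delta\cred(e_\el)\le f_R$ for each of the at most $\el^*$ path elements with $\el<\el^*$. The key quantitative gain over a crude bound is that we do not need the worst case at every level simultaneously: $\credlev(e_\el)$ can actually increase only when $\level(e_\el)\ge 2\levopt(e_\el)+1$, equivalently $\levopt(e_\el)\le(\level(e_\el)-1)/2=(\el-1)/2$. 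The element $e_\el$ is a \emph{uniformly random} element of level $\el$ (this is the whole point of \RAND: the random path hits a uniform node on each level), so the probability that its \OPT-level is at most $(\el-1)/2$ is at most the fraction of \OPT-nodes at level $\le(\el-1)/2$, which is less than $2^{(\el-1)/2+1}/2^\el = 2^{-(\el-1)/2}$ — geometrically small in $\el$. Therefore $\E[\Delta\cred(e_\el)]\le f_R\cdot 2^{-(\el-1)/2}$, and summing the geometric series over $\el=0,1,\dots,\el^*-1$ gives a contribution bounded by a small constant times $f_R$, well within the claimed budget; the topmost path element $e_{\el^*}$ (if distinct from $\re$) is moved to node $\node(\re)$ one level up or stays, so its level does not increase and $\Delta\cred(e_{\el^*})\le 0$.

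Actually, to keep the argument as simple as the paper promises and to get exactly the stated constant $\el^*/2+1$, I would not even invoke the geometric refinement: it is cleaner to bound $\E[\Delta\cred(e_\el)]$ for each level $\el$ by noting that $e_\el$ moving down one level raises $\credlev$ by at most $1$ only in the regime $\el\ge 2\levopt(e_\el)$, and a direct counting over the uniformly random level-$\el$ node shows $\E[\Delta\credlev(e_\el)]\le 1/2$ (summing $2^{-\level(e)}$-type terms over level-$\el$ elements the way $B$ was handled for \DetLong, scaled appropriately). Summing $1/2$ over the $\le\el^*$ relevant levels, plus a $+1$ slack for boundary effects at level $\el^*$, yields $\sum_{e\in P'}\E[\Delta\cred(e)]\le(\el^*/2+1)\cdot f_R$. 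The main obstacle is making the per-level expectation bound of $1/2$ rigorous: it hinges on the precise claim that \RAND's chosen path node at level $\el$ is uniform among level-$\el$ nodes \emph{conditioned on the entire history}, so that the expectation couples correctly with the (history-dependent) \OPT configuration — this is exactly where the ``expectation taken over all random choices from the beginning of the input'' in~\eqref{eq:randServiceCost} is used, and it is the one subtle point that must be stated carefully rather than waved through.
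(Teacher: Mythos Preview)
Your plan is correct and coincides with the paper's proof: the paper also observes that off-path elements contribute nothing (there is no flip-rank term in the \RAND potential), explicitly fixes the element-to-node mapping at the start of the round (this is precisely the conditioning step you flag as the one subtle point, and it is resolved exactly as you suggest---use only the fresh randomness of the current path choice), and then bounds the expected contribution at each level $\el<\el^*$ by $f_R/2$ via the uniformity of the random path node and the count of elements whose $\levopt$ is small enough to make $\credlev$ grow. The only cosmetic difference is that the paper handles levels $0$ and $1$ by the trivial bound $f_R$ each (rather than your uniform $1/2$-per-level estimate), which is where the additive $+1$ in the constant $(\el^*/2+1)$ actually comes from; your ``boundary slack at level $\el^*$'' is unnecessary since $e_{\el^*}$ stays on its level and contributes exactly zero.
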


\begin{proof}
We show a stronger property, namely that the lemma holds even if we fixed the mapping 
of elements to nodes (functions $\elem$ and $\node$) before the round. That is, we show  an upper bound the expected growth of credits, 
conditioned on an arbitrary fixed mapping and using only the randomness stemming 
from the choice of a random path chosen in the considered round. 

In particular, we assume that the level $\el^*$ of requested element $\re$ is fixed.
Recall that to serve $\re$, \RAND performs an augmented push-down operation 
along a random path of nodes $v_0, v_1, \dots, v_{\el^*}$, where $\level(v_i) = i$.
Let $E'_d$ be the set of elements of $E'$ on level $\el$. We upper-bound the value 
of $\E[\sum_{e \in E'_\el} \Delta \cred(e)]$. This value is clearly $0$ for 
$\el \geq \el^*$, as elements from such sets $E'_d$ do not change their levels. 
(Element $\elem(v_{\el^*})$ might be moved to another node, but remains on level $\el^*$.)
Furthermore, as at most one element from level $d$ increases its level (and its level-weight
can thus grow by at most $1$), $\E[\sum_{e \in E'_0} \Delta \cred(e)]$ 
and $\E[\sum_{e \in E'_1} \Delta \cred(e)]$ can be trivially upper-bounded by $f_R$ each.
Thus, we fix any level $\el \in \{2, \dots, \el^*-1\}$ and we look where the elements of $E'_d$
are stored in the tree of \OPT: let $A_\el \subseteq E'_\el$ be those elements whose level 
in the tree of \OPT is at most $\el-2$. To bound $\E[\sum_{e \in E'_\el} \Delta \cred(e)]$, we 
consider two cases.
\begin{itemize}
\item $e \in E'_\el \setminus A_\el$. 
Even if the level of $e$ increases to $d+1$ because of the augmented 
push-down operation, using $d \geq 2$, we have 
$\level(e) \leq d + 1 < 2 \cdot (d-1) + 2 \leq 2 \cdot \levopt(e) + 2$. Thus, 
by the definition of level-weight (see~\eqref{eq:credlev}), the credit of $e$ remains $0$ 
and $\Delta \cred(e) = 0$.

\item $e \in A_\el$. 
The growth of level-weight of $e$ is upper-bounded by $1$ and thus the increase of its 
credit upper-bounded by $f_R$. The increase however happens only if $\node(v_\el) = e$. 
As $v_\el$ is chosen randomly within level $\el$, this probability is equal to $1/2^\el$, 
and therefore $\E[\Delta \cred(e)] \leq f_R \cdot 2^\el$. 
\end{itemize}

Summing up, by the linearity of expectation, 
$$\E[\sum_{e \in E'} \Delta \cred(e)] 
    =  \E[\sum_{e \in E'_0} \Delta \cred(e)] 
    + \E[\sum_{e \in E'_1} \Delta \cred(e)] 
    + \sum_{\el = 2}^{\el^*-1} \E[\sum_{e \in E'_\el} \Delta \cred(e)]$$
Thus, 
\begin{align*}
\textstyle \E[\sum_{e \in E'} \Delta \cred(e)] 
    & \textstyle \leq   2 \cdot f_R
    + \sum_{\el = 2}^{\el^*-1} \sum_{e \in A_\el} \E\left[ \Delta \cred(e)\right] \\
    & \textstyle = 2 \cdot f_R 
    + \sum_{\el = 2}^{\el^*-1} |A_\el| \cdot f_R \cdot 2^{-\el}.
\end{align*}
Using $|A_d| = 2^{d-1} - 1$, we get
$\E[\sum_{e \in E'} \Delta \cred(e)] 
\leq 2 \cdot f_R + (\el^* - 2) \cdot f_R / 2 = (\el^* / 2 + 1) \cdot f_R$.
\end{proof}

The result now follows by combining the above lemmas essentially in the same way as we 
did in Theorem~\ref{thm:12} for \DET: a simple argument shows that 
the decrease of $\cred(e^*)$ is able to compensate for $\E[\RAND]$ and the increase
of remaining credits. 

\begin{theorem}\label{thm:16}
Algorithm \RandLong is $16$-competitive.
\end{theorem}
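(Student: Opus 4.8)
\textbf{Proof plan for Theorem~\ref{thm:16}.}

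The plan is to mirror the structure of the proof of Theorem~\ref{thm:12}, splitting each round into the \OPT-swap part and the service part, and showing that in each part $\E[\RAND] + \E[\sum_{e\in E}\Delta\cred(e)] \le 16\cdot\opt$. For the first part this is immediate from Lemma~\ref{lem:randSwap}: since \RAND does nothing while \OPT swaps, the amortized cost is at most $0 + 2\cdot f_R\cdot\opt = 16\cdot\opt$ using $f_R=8$. So the work is entirely in the service part, where \RAND accesses $\re$ at level $\el^*$ and performs the augmented push-down along a random $\el^*$-level path.

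For the service part I would write $\sum_{e\in E}\Delta\cred(e) = \Delta\cred(\re) + \sum_{e\in E'}\Delta\cred(e)$ and bound the second term by $(\el^*/2+1)\cdot f_R$ via Lemma~\ref{lem:randPush}. It remains to control $\E[\RAND] + \Delta\cred(\re)$. By Lemma~\ref{lem:access-cost}, $\RAND \le 4\el^*$ deterministically. Let $h^* = \levopt(\re)$; after serving, $\re$ is at the root so $\cred(\re)=0$, hence $\Delta\cred(\re) = -\cred_{\text{init}}(\re)$. Splitting into the two cases $\el^* \le 2h^*+1$ and $\el^* \ge 2h^*+2$ exactly as in Theorem~\ref{thm:12}: in the first case $\cred_{\text{init}}(\re)=0$ so $\RAND+\Delta\cred(\re) \le 4\el^* \le 8h^*+4$; in the second case $\cred_{\text{init}}(\re) \ge f_R\cdot\credlev(\re) = f_R(\el^*-2h^*-1)$, so $\RAND + \Delta\cred(\re) \le 4\el^* - f_R\el^* + 2f_R h^* + f_R$. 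The subtle point here is that with $f_R=8$ the coefficient $4 - f_R = -4$ on $\el^*$ is negative, and I would use $\el^* \ge 2h^*+2$ to bound $(4-f_R)\el^* \le (4-f_R)(2h^*+2) = -8h^*-8$, yielding $\RAND+\Delta\cred(\re) \le -8h^*-8 + 16h^* + 8 = 8h^*$, which is at most $8h^*+4$. So in both cases $\E[\RAND] + \Delta\cred(\re) \le 8h^*+4$.

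Combining, $\E[\RAND] + \E[\sum_{e\in E}\Delta\cred(e)] \le (8h^*+4) + (\el^*/2+1)\cdot f_R$. Here I need $\el^*/2$ under control; since the earlier bound only becomes favorable once the level-weight absorbs the access cost, I expect this is where the accounting is tightest --- one must verify that the negative drift of $\cred(\re)$ in the large-$\el^*$ case also pays for the $(\el^*/2)\cdot f_R = 4\el^*$ term coming from Lemma~\ref{lem:randPush}. Concretely, redoing the case $\el^* \ge 2h^*+2$ but keeping the full budget: $\E[\RAND] + \E[\sum\Delta\cred] \le 4\el^* + 4\el^* + f_R - f_R(\el^*-2h^*-1) + 2f_R \le 8\el^* - 8\el^* + 16h^* + 8 + 16 = 16h^*+24$, which exceeds $16(h^*+1)$; to fix this one must be more careful, using $\el^* \ge 2h^*+2$ to also kill the residual $\el^*$ terms so that the bound collapses to $16h^*+16 = 16\opt$. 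In the case $\el^* \le 2h^*+1$, directly $\E[\RAND]+\E[\sum\Delta\cred] \le 4\el^* + 4\el^* + f_R + 2f_R \le 8(2h^*+1) + 24 = 16h^*+32$, which again needs tightening --- the honest route is to keep the $\sum_{e\in E'}$ bound as $(\el^*/2+1)f_R$ and the $\RAND$ bound as $4\el^*$, observe their sum is $8\el^* + f_R$, and then check that $\Delta\cred(\re)$ plus this is at most $16(h^*+1)$ in each regime; the main obstacle is getting all the constants to close simultaneously, which is precisely why the authors chose $f_R=8$ rather than $f=4$. I would carry out this constant-chasing carefully, case by case, to land exactly at $16\cdot\opt$, then sum over rounds and use that initial credits are zero.
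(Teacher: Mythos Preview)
Your approach is exactly the paper's: split into the \OPT-swap part (handled by Lemma~\ref{lem:randSwap}) and the service part, and in the latter combine the bound $4\el^*$ from Lemma~\ref{lem:access-cost}, the bound $(\el^*/2+1)\cdot f_R$ from Lemma~\ref{lem:randPush}, and the release $\Delta\cred(\re)$, in the two cases $\el^*\le 2h^*+1$ and $\el^*\ge 2h^*+2$.

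The only issue is a bookkeeping slip: you have introduced a spurious $+2f_R$ in both service-part case computations. This seems to be carried over from the structure of Theorem~\ref{thm:12}, where two separate lemmas (Lemma~\ref{lem:e_k_moves} and Lemma~\ref{lem:single_level_shifts}) each contributed $f$, producing the $+2f$ in~\eqref{eq:det_competitive}. In the \RAND analysis there is only one such lemma, Lemma~\ref{lem:randPush}, and its contribution $(\el^*/2+1)\cdot f_R$ is all there is; no extra $+2f_R$ belongs. Once you drop it, the constants close on the nose with no ``tightening'' needed:
\begin{itemize}
\item If $\el^*\le 2h^*+1$: $\;4\el^* + (\el^*/2+1)\cdot f_R + 0 = 8\el^* + 8 \le 8(2h^*+1)+8 = 16h^*+16$.
\item If $\el^*\ge 2h^*+2$: $\;4\el^* + (\el^*/2+1)\cdot f_R - f_R(\el^*-2h^*-1) = 8\el^*+8 - 8\el^* + 16h^* + 8 = 16h^*+16$.
\end{itemize}
Both cases give exactly $16(h^*+1)=16\cdot\opt$, which is the paper's computation. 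Your initial detour of first isolating $\RAND+\Delta\cred(\re)\le 8h^*+4$ and then adding $(\el^*/2+1)f_R$ cannot work, as you noticed, because the latter still carries $\el^*$; the fix is simply to combine all three pieces before bounding, as above.
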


\begin{proof}
Similarly to the proof for \DET, 
it is sufficient to show that within either part of a~single round, 
$\E[\RAND] + \E[\sum_{e \in E} \Delta \cred(e)] \leq 16 \cdot \opt$. 
In the first part, when \OPT performs its swaps, the relation follows immediately by
\autoref{lem:randSwap}, thus below we focus on the second part of the round only.

Let $h^* = \levopt(\re)$ be the level of $\re$ in the tree of $\OPT$. 
By \autoref{lem:access-cost}, the cost of \RAND is at most $4\cdot \el^*$,
and by \autoref{lem:randPush}, the expected increase of credits from $E'$ is at most
$(\el^* / 2 + 1) \cdot f_R$. 
It remains to estimate the value of $\E[\Delta \cred(\re)]$. 
As $\re$ is moved to the root, its final credit is $0$. 
However, its initial credit depends on the value of $h^*$. 
\begin{itemize}
\item If $\el^* \leq 2\cdot h^* + 1$, then the initial credit of $e$ is $0$, and thus 
\begin{align*}
    \E[\RAND] + \E\left [\sum_{e\in E} \Delta \cred(e)\right ] =&
    4\cdot \el^* + (\el^* / 2 + 1) \cdot f_R\\=&  8 \cdot \el^* + 8 \leq 16 \cdot h^* + 16.
\end{align*}

\item If $\el^* \geq 2\cdot h^* + 2$, then the initial credit of $\re$ is
$(\el^* - 2\cdot h^* - 1) \cdot f_R$, and thus
\begin{align*}
    \E[\RAND] + \E\left [\sum_{e\in E} \Delta \cred(e)\right ] \leq\\
      4\cdot \el^* + (\el^* / 2 + 1) \cdot f_R - f_R \cdot \el^* + 2 \cdot h^* \cdot f_R + f_R \leq \\
      8 \cdot \el^* + 8 - 8 \cdot \el^* + 16 \cdot h^* + 8 = 16 \cdot h^* + 16.
\end{align*}
\end{itemize}
Thus, using that the cost of \OPT is $h^* + 1$, in 
either case, we obtained 
$\E[\RAND] + \E\left [\sum_{e\in E} \Delta \cred(e)\right ] \leq 16 \cdot h^* + 16
= 16 \cdot \OPT$, which concludes the proof.
\end{proof}


\section{Empirical Evaluation}
\label{sec:eval}

Although we have proven dynamic optimality for \RandLong and \DetLong, the question of which of the existing single-source tree network algorithms performs best in practice remains. 
In this section we turn to answer this question by empirically studying six algorithms: all the known
single-source tree network algorithms, i.e., \DetLong, \RandLong, \movehalf, and \maxpush\ (cf. Section \ref{sec:algorithms}), as well as the static offline balanced tree\footnote{\iosif{A static tree where elements are placed in decreasing frequency in a BFS order. \statopt\ performs no adjustments.}} (\statopt), and the demand-oblivious initial tree that performs no adjustments (\oblstat).

We compare all algorithms with synthetic and real access sequences with varying degrees of temporal and spatial locality.
Specifically we address the following five questions:
\begin{enumerate}[(Q1)]
\item \iosif{How does the benefit of self-adjustment depend on the network size?}\label{q:nwsize}

\item Which algorithm performs best with increasing temporal locality? \label{q:temporal}

\item Which algorithm performs best with increasing spatial locality?\label{q:spatial}

\item How does \DetLong compare to \RandLong in combined settings of temporal and spatial locality and how does it compare to \oblstat?\label{q:rotor}

\item \iosif{Do experiments with real data reflect the insights gained from those with synthetic data (Q\ref{q:nwsize}--Q\ref{q:rotor})?}\label{q:real}
\end{enumerate}

We elaborate on our empirical evaluation by presenting \iosif{our assumptions on locality} and methodology in Section \ref{sec:methodology}, our results together with their implications in  Section \ref{sec:results}, \iosif{and the main takeaways in Section \ref{sec:evaldiscussion}}.

\subsection{Methodology}
\label{sec:methodology}

We implemented all algorithms and the experimental setup in Python 3.9.
We tested all algorithms with synthetic and real data of varying locality.
Our source code and test data are publicly available \cite{repo}.
The initial trees were always constructed by placing the nodes uniformly at random.
\alenex{In Q\ref{q:temporal}--Q\ref{q:rotor}, we tested the scenario of 65,535 nodes  (complete binary tree of depth 15) and $10^6$ requests thoroughly, but we also experimented with different tree sizes. Our experiments showed that focusing on one scenario is representative of the algorithms behaviour.}
We repeated each experiment with synthetic data ten times and plotted the average values of the ten experiments for each case (per plot details follow).

\noindent \textbf{Temporal Locality}.
\iosif{Following \cite{10.1145/3379486}, we relate the degree of temporal locality of a sequence with the probability of repeating request $\sigma_i$, i.e., $p = \Pr[\sigma_{i+1} = \sigma_i]$.
Given $p$, we start by generating a sequence $\sigma$ of requests drawn uniformly at random. Then we post-process the sequence by the following rule: for $i=2,\ldots, 10^6$ with probability $p$, we set $\sigma_i = \sigma_{i-1}$ and otherwise $\sigma_i$ stays intact.}

\noindent \textbf{Spatial Locality}. \iosif{We used the Zipf distribution \cite{siegrist2017probability} (discrete, power law distribution)
to generate sequences of increasing spatial locality and decreasing empirical entropy.
In our context, a sequence with high spatial locality draws most requests from a small subset of nodes (the subset decreases as the  skewness increases), but requests for any node are allowed as well.
The probability mass function is $f(k, a) = 1/(k^a\sum_{i=1}^N i^{-a})$, for an element with weight 
$k$ and parameter $a$, where $N$ is the number of nodes and $a$ defines the skewness.
We set the weight of the $i^\text{th}$ element to $i^{-a}$ and normalized all weights. 
These sequences differ from the ones with controlled temporal locality in that we don't have any guarantees on the probability of repeating the previous request.}

\iosif{For Q\ref{q:nwsize}, we run experiments for trees with sizes 255, 1023, 4095, 16383, and 65535 nodes (i.e. tree depths 7, 9, 11, 13, 15) and $10^6$ requests. 
We computed the difference of the average total cost of each of the four self-adjusting algorithms minus the total cost of \oblstat,  
in high temporal ($p=0.9$) and spatial ($a=2.2$) locality scenarios. 
} 

For Q\ref{q:temporal}, we generated synthetic request sequences with increasing temporal locality. 
For each value of $p \in (0, 0.15, 0.3, 0.45, 0.6, 0.75, 0.9)$, the respective (average per ten samples for every case) empirical entropies\footnote{The empirical entropy of a sequence $\sigma$ is defined using the frequency $f(\sigma_i)$ of each element $\sigma_i$ in  $\sigma$: $\sum_{\sigma_i} f(\sigma_i)\log_2(1/f(\sigma_i))$ \cite{schmid2015splaynet}.} were $(15.95, 15.94, 15.91, 15.87, 15.81, 15.67, 15.16)$.
Thus, by increasing $p$ we indeed increase the degree of temporal locality of the sequence $\sigma$.

For Q\ref{q:spatial}, we defined a standard Zipf distribution over a fixed set of $N=65,535$ nodes and changed the distribution parameter to increase skewness.
For $a$ we used values from $(1.001, 1.3, 1.6, 1.9, 2.2)$, where the distribution skewness increases with $a$.
For each $a$ we drew sequences of length $10^6$ with respective empirical entropies $(11.07, 6.47, 3.88, 2.63, 1.92)$.

For Q\ref{q:rotor}, we focused on the performance of \DetLong, as it had the best performance in Q\ref{q:temporal} and Q\ref{q:spatial}, together with \RandLong. We first considered 65,535 nodes and $10^6$ requests that we constructed by combinations of temporal and spatial locality scenarios. 
We started with sequences drawn from Zipf distributions for $a\in \{1.001, 1.3, 1.6, 1.9, 2.2\}$ (as in Q\ref{q:spatial}), which we post-processed as in Q\ref{q:temporal}: we repeated the next element with probability $p\in \{0, 0.25, 0.5, 0.75, 0.9\}$.
For each sequence (defined by $a$ and $p$), we computed the average (total) cost difference between \DetLong and the oblivious static initial tree.
We repeated each experiment ten times and computed the averages.

We constructed a three-dimensional plot, where x-axis includes the values of $p$ (temporal locality), the y-axis includes the values of $a$ (spatial locality), and the z-axis shows the corresponding cost difference (\DetLong minus \oblstat).
We plotted the cost in a wireframe, where the data points form a grid (cf. Section \ref{sec:results} and Figure \ref{fig:Q4wire}).
Moreover, for ten sequences of $10^6$ requests drawn uniformly at random from the set of 65,535 nodes, we plotted a histogram of the cost differences of \DetLong and \RandLong, to show the extent to which they differ.


\iosif{For Q\ref{q:real} we used data from the Canterbury corpus \cite{arnold1997corpus} (as in \cite{albers2008list}). We used five books with the largest number of words. 
To increase the dataset sizes, we considered the string containing the sequence of words as they appear in each book, from which we extracted a sequence of requests by a sliding window of three letters, sliding by one character. That is, the first triple includes letters 1 to 3, the second 2 to 4, and so on, until the last three letters. The set of nodes (elements) for each sequence is derived by the set of unique triples appearing in each sequence.
Following this methodology for the five largest books of the corpus, we got (7,218; 6,962; 8,873; 6,225; 10,303) nodes and (3,128,781; 590,592; 261,829; 361,994; 1,627,137) requests, respectively.}

\iosif{To get an indication of the locality of these datasets we plotted them on a complexity map as it was defined in \cite{10.1145/3379486}. A complexity map shows the pairs of temporal and non-temporal complexity of each dataset. These quantities are computed using the size of compressed files, each containing a variant of the original sequence reflecting the two complexity dimensions. This method is different from the definitions of locality that we used in the synthetic data experiments and hence serves only as an indication.
}

\subsection{Results}
\label{sec:results}

We demonstrate and discuss our results for Q\ref{q:nwsize}--Q\ref{q:real}. 

\noindent \textbf{Q\ref{q:nwsize}: Network Size and Adjustment Benefit} \iosif{In figures \ref{fig:increasing-size-temp0.9} and \ref{fig:increasing-size-spatial2.2} we can see that as the tree size increases the benefit of reconfiguration increases as well.
This is expected as in larger trees, requests of non-frequent elements are more expensive and adjustment is more beneficial.
Therefore, in the following plots, the thresholds after which our adaptive algorithms perform better than \statopt, are not absolute, as they improve with network size.}

\newcommand{\diffplotsize}{0.3}
\begin{figure}[t]
    \centering
    \subfloat[\centering $p = 0.9$]{{\includegraphics[scale=\diffplotsize]{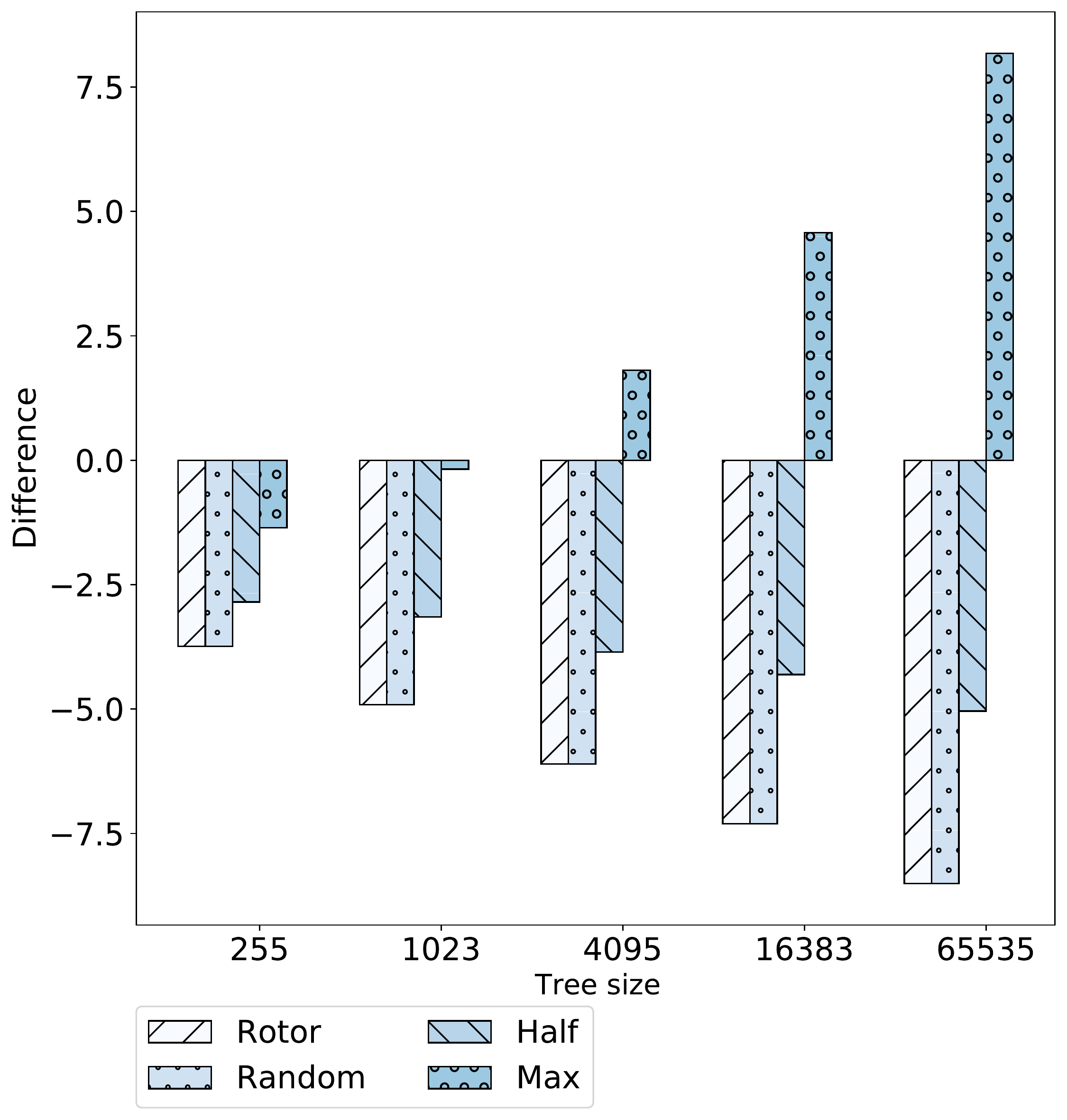} }\label{fig:increasing-size-temp0.9}}%
    ~~~~
    \subfloat[\centering $a = 2.2$]{{\includegraphics[scale=\diffplotsize]{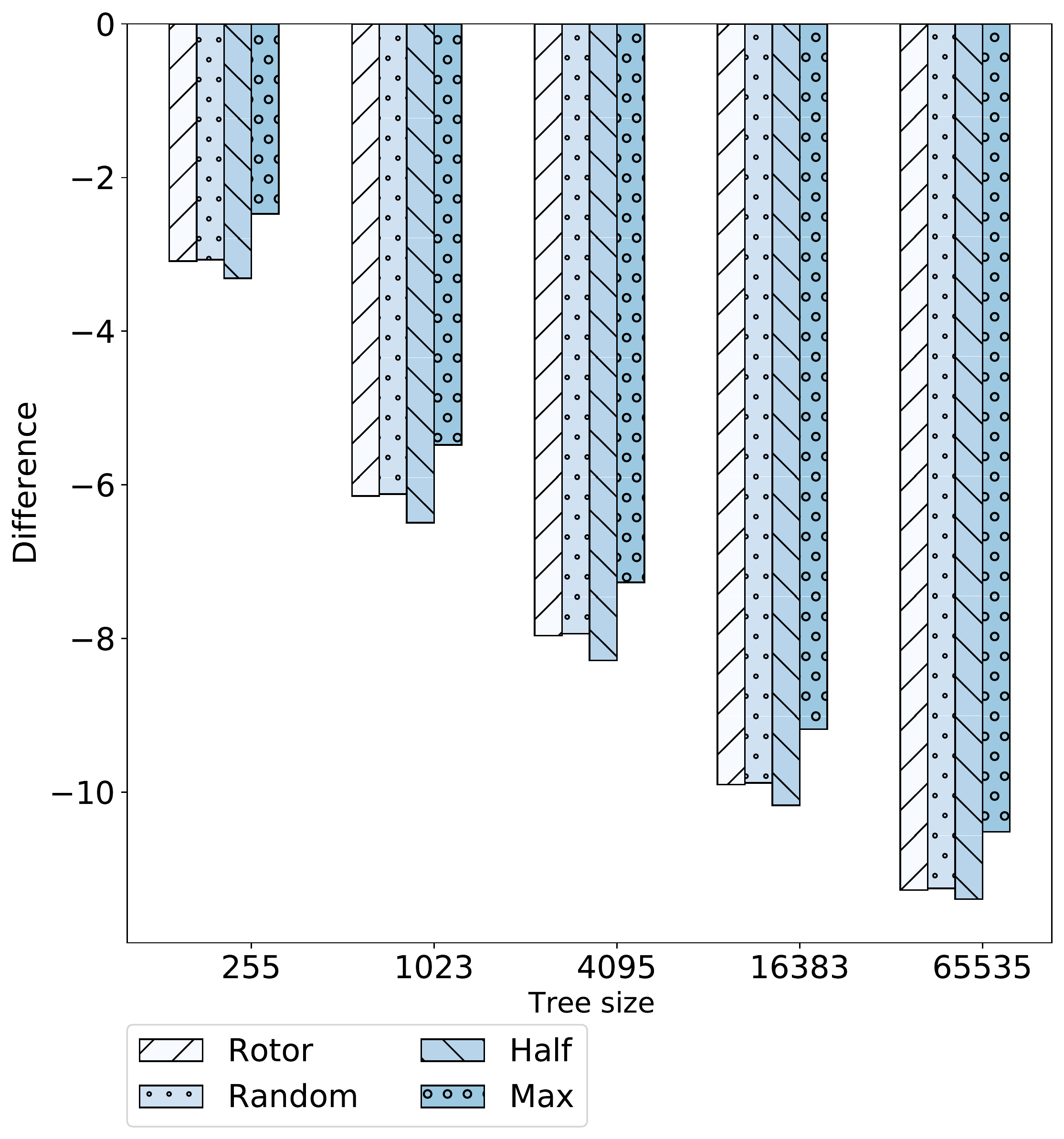} }\label{fig:increasing-size-spatial2.2}}    
    \caption{Q\ref{q:nwsize}: Total cost difference of the self-adjusting algorithms minus \oblstat, for high temporal ($p = 0.9$) and spatial ($a = 2.2$) locality.}%
    \label{fig:diff}%
\end{figure}



\begin{figure}[t]
    \centering
    \begin{minipage}{.49\textwidth}
        \centering
        \includegraphics[scale=0.33]{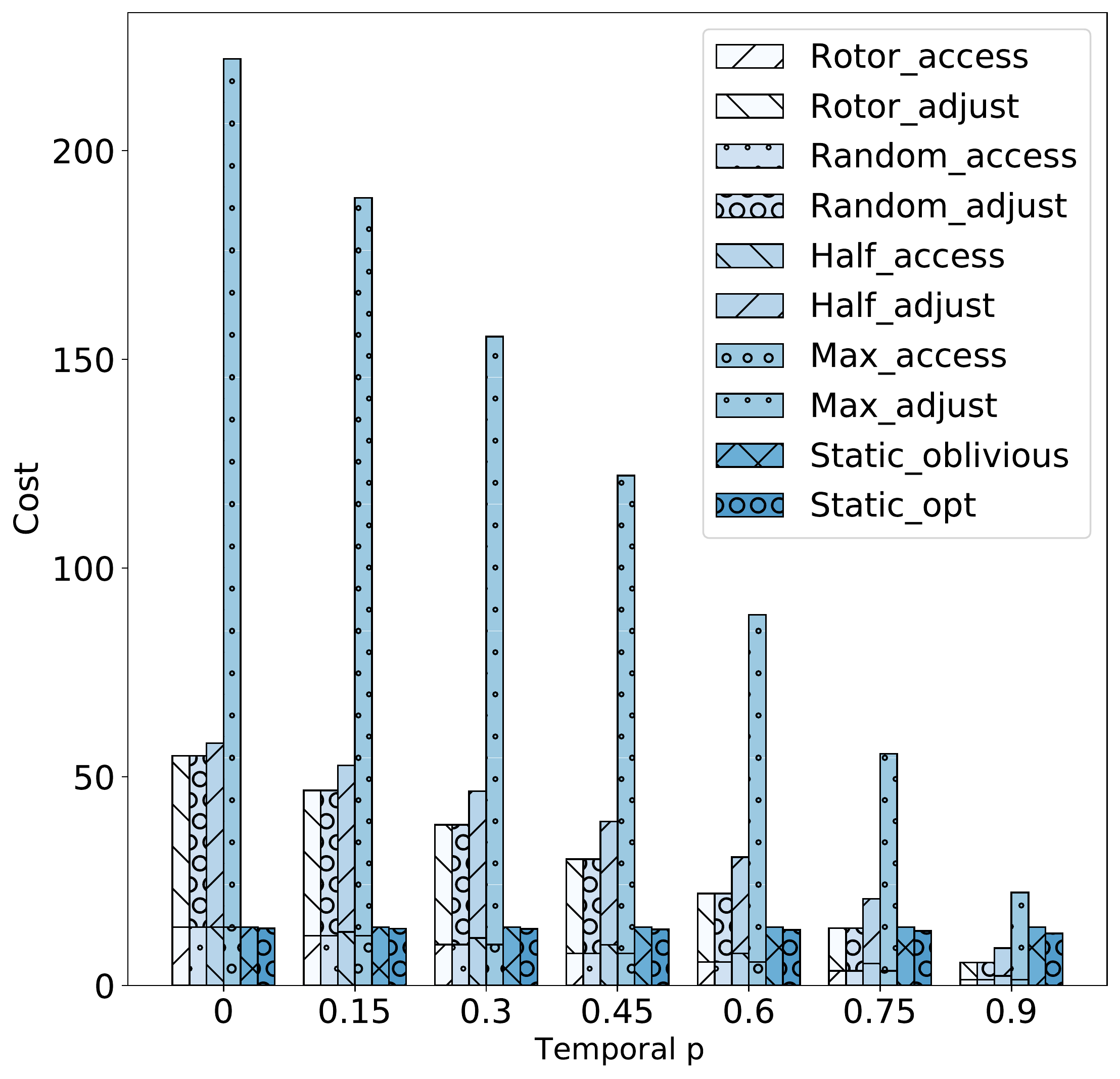}
        \caption{Q\ref{q:temporal}: results for temporal locality. The x-axis shows the  probability of repeating the last element and y-axis shows the costs.}
        \label{fig:Qtemporal}
    \end{minipage}%
    \hfill
    \begin{minipage}{0.49\textwidth}
        \centering
        \includegraphics[scale=0.33]{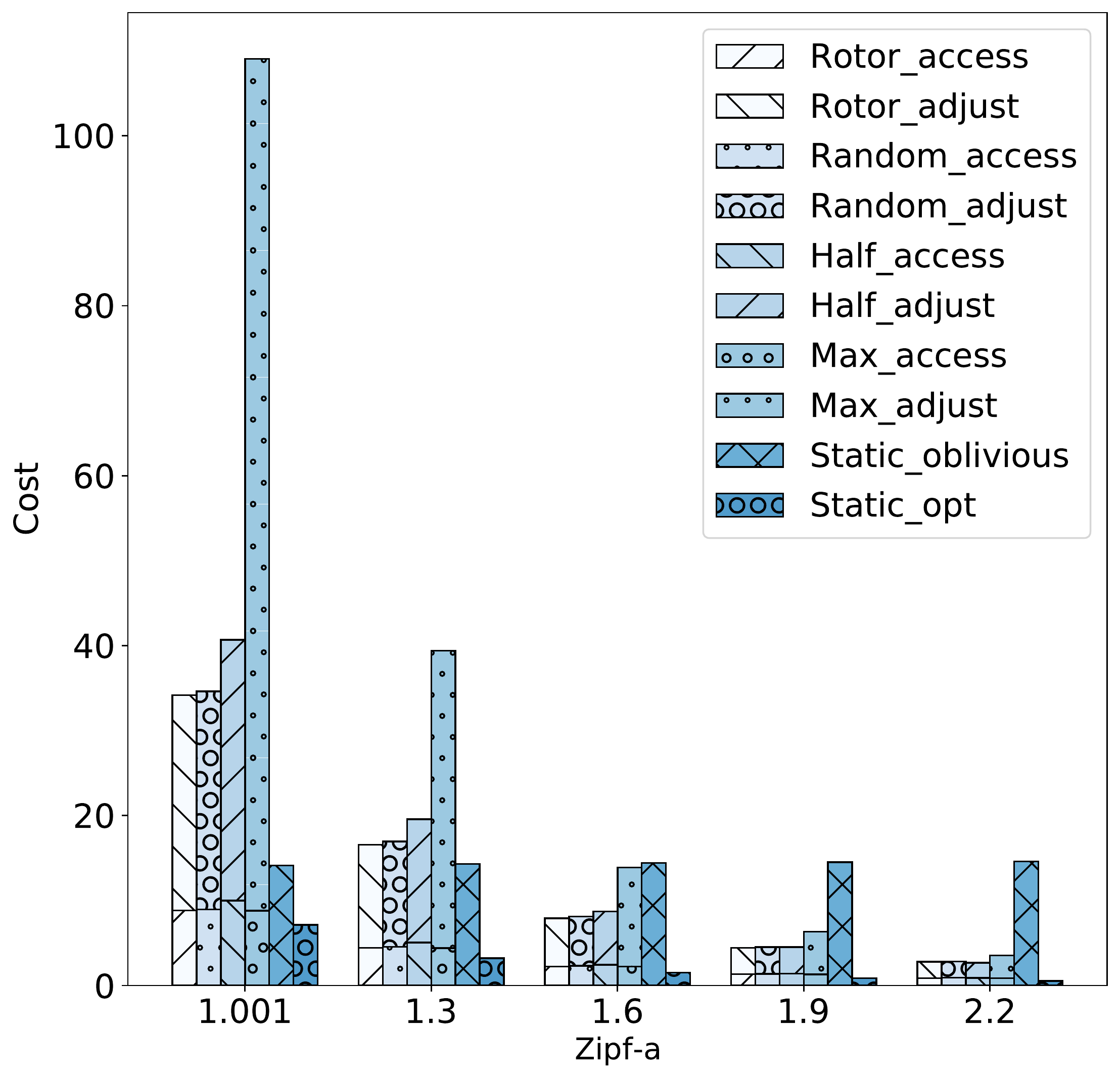}
        \caption{Q\ref{q:spatial}: results for spatial locality. The x-axis shows the Zipf distributions parameters and the y-axis the average cost. 
    }
        \label{fig:Qspatial}
    \end{minipage}
\end{figure}

\begin{figure}[h!]
    \centering
    \subfloat[\centering \DetLong minus \oblstat]{{\includegraphics[scale=0.7]{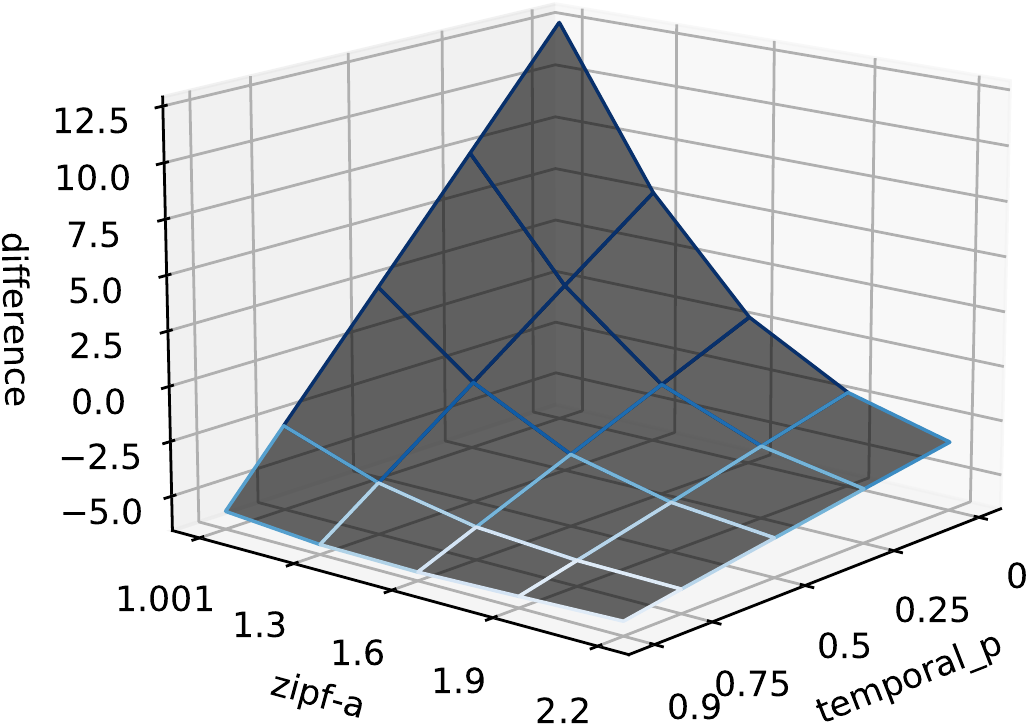} }\label{fig:Q4wire}}%
    ~~~~~
    \subfloat[\centering \DetLong minus \RandLong]{{\includegraphics[scale=0.45]{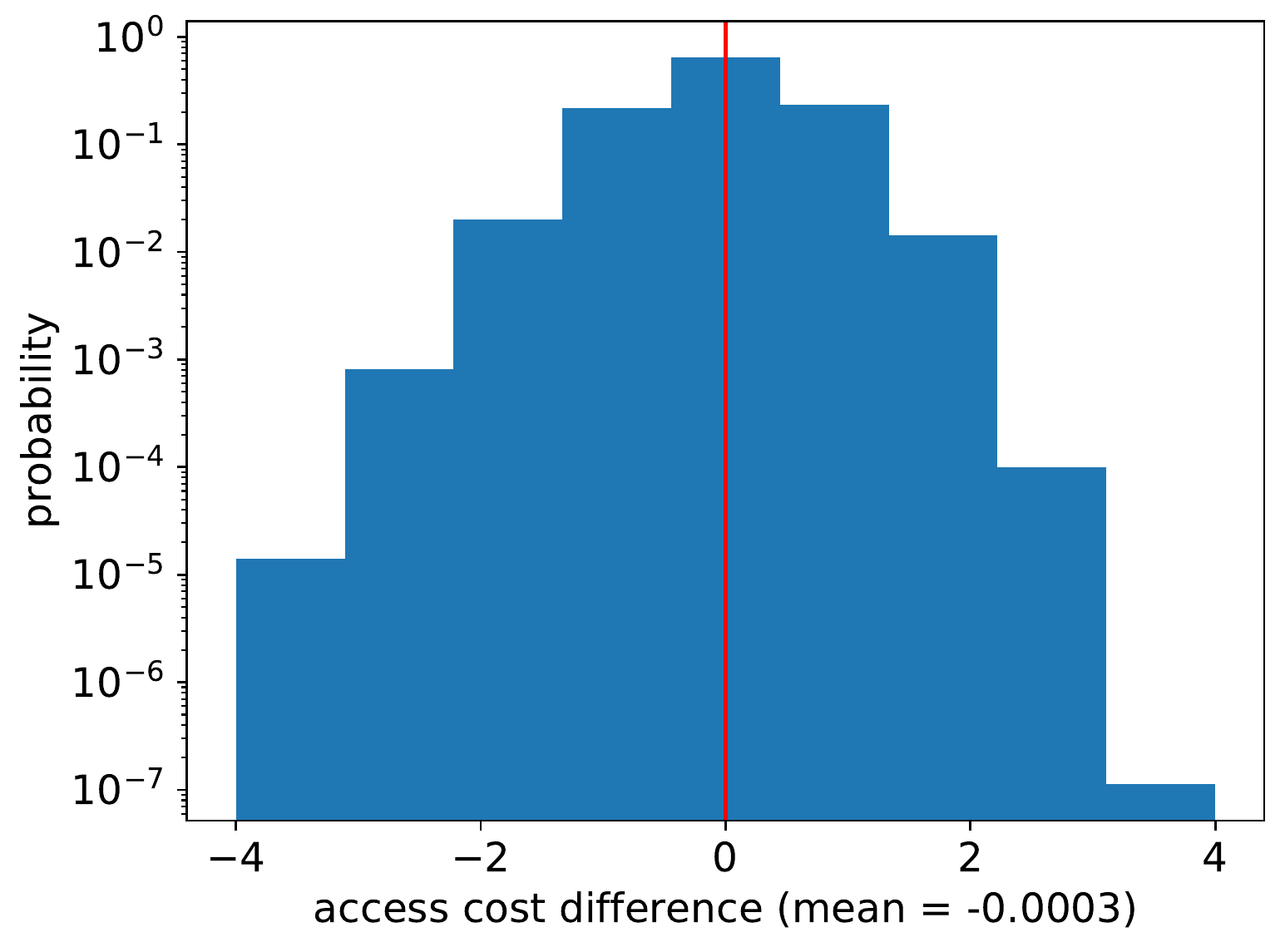} }\label{fig:Q4rotrand}}    \caption{Q\ref{q:rotor}: \DetLong performance. Figure \ref{fig:Q4wire} shows the difference of average cost between \DetLong and \oblstat in combined scenarios of temporal and spatial locality of Q\ref{q:temporal} and Q\ref{q:spatial} (negative values have lighter color). Figure \ref{fig:Q4rotrand} shows a histogram of the access cost difference distribution per request between \DetLong and \RandLong, taken over ten sequences of uniform data. The mean is -0.0003 (marked with a red vertical line). }%
    \label{fig:Qrotor}%
\end{figure}

\noindent \textbf{Q\ref{q:temporal}: Temporal Locality.}
In Figure \ref{fig:Qtemporal} we present our results for Q\ref{q:temporal}. 
We plotted the total cost for each algorithm.
We observe that \DetLong and \RandLong have the best performance and that all self-adjusting algorithms exploit temporal locality, as expected, but with varying efficiency. 
Interestingly, \DetLong and \RandLong outperform all other algorithms a bit after $p=0.75$, while \movehalf is only marginally more costly. On the other hand, the adjustment cost of \maxpush is quite high in all scenarios.

\noindent \textbf{Q\ref{q:spatial}: Spatial Locality.}
In Figure \ref{fig:Qspatial} we show our results for the spatial locality experiments. 
For the sequence of Zipf distributions with parameters $a \in (1.001, 1.3, 1.6, 1.9, 2.2)$, the respective average empirical entropies of the sequences that we sampled are $(11.07, 6.47, 3.88$, $2.63, 1.92)$. That is, as $a$ increases, the sequences are more skewed, and the entropy decreases.
Similarly to the temporal locality results, we observe that indeed all self-adjusting algorithms exploit spatial locality (\DetLong, \RandLong, and \maxpush have similar performance), and the reconfiguration cost pays off already from $a=1.6$ (when compared to \oblstat).
However, \statopt\ has the best performance in all scenarios.



\noindent \textbf{Q\ref{q:rotor}: \DetLong Performance.}
In Q\ref{q:rotor} we take a closer look on the performance of \DetLong, as in Q\ref{q:temporal} and Q\ref{q:spatial} it has the best performance, together with \RandLong.
In Figure \ref{fig:Q4wire} we plot the total cost difference between \DetLong and the oblivious static initial tree (\oblstat), in various scenarios of temporal and spatial locality.
As expected, their combination has a more dramatic effect in cost reduction.
Moreover, for ten sample sequences (each of length $10^6$) we observed (Figure \ref{fig:Q4rotrand}) that the difference between the cost of \DetLong and \RandLong is at most 4 (mean is $-0.0003$).
Thus, the variance in their performance difference is also rather small (in the previous sections we observed that the means are almost equal).




\begin{figure}[t]
    \centering
    \begin{minipage}{.49\textwidth}
        \centering
        \includegraphics[scale=0.5]{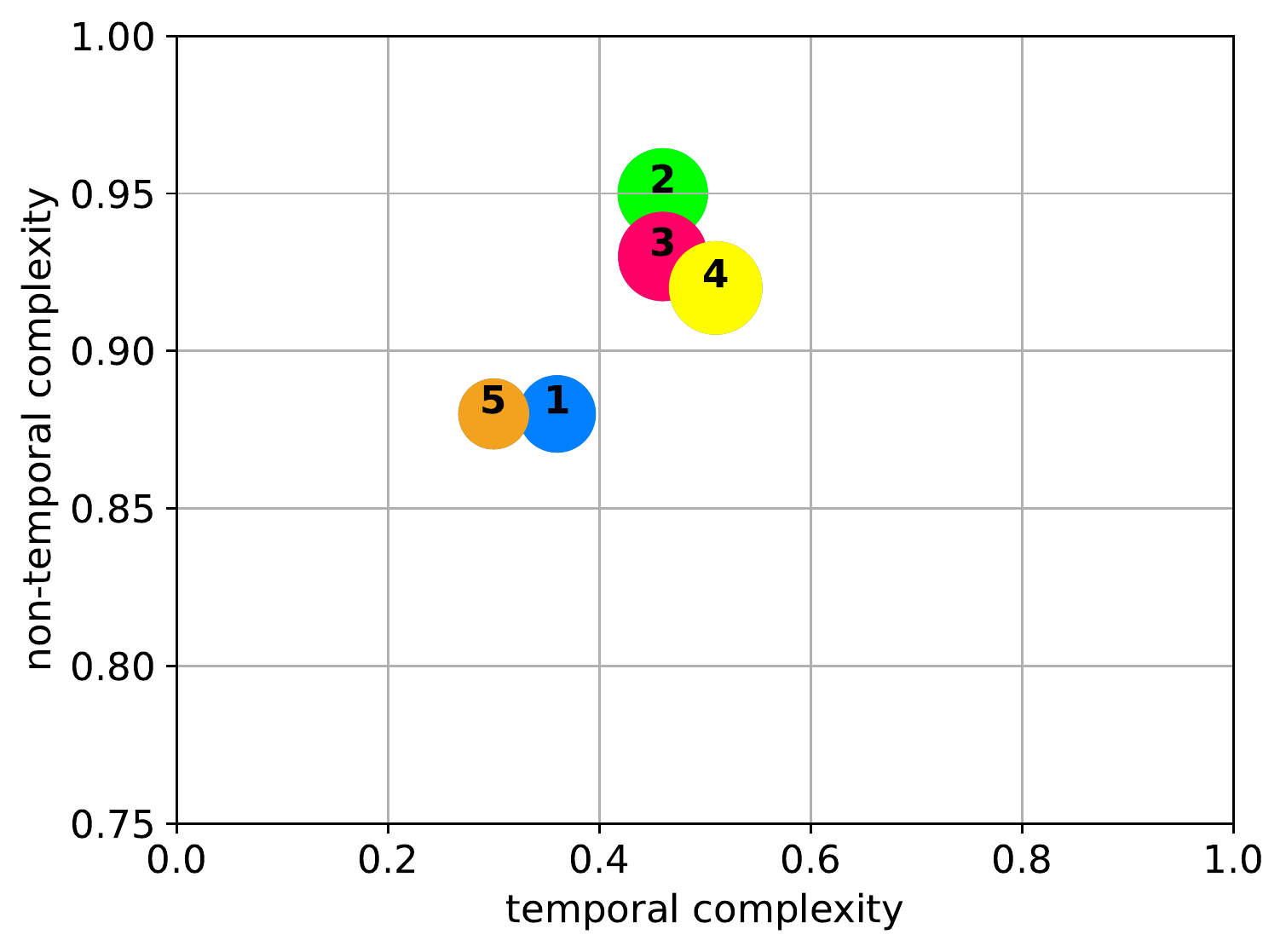}
        \caption{Q\ref{q:real}: Complexity map \cite{10.1145/3379486} of the five datasets extracted from the five largest books in the corpus data.}
        \label{fig:complmap}
    \end{minipage}%
    \hfill
    \begin{minipage}{0.49\textwidth}
        \centering
        \includegraphics[scale=0.33]{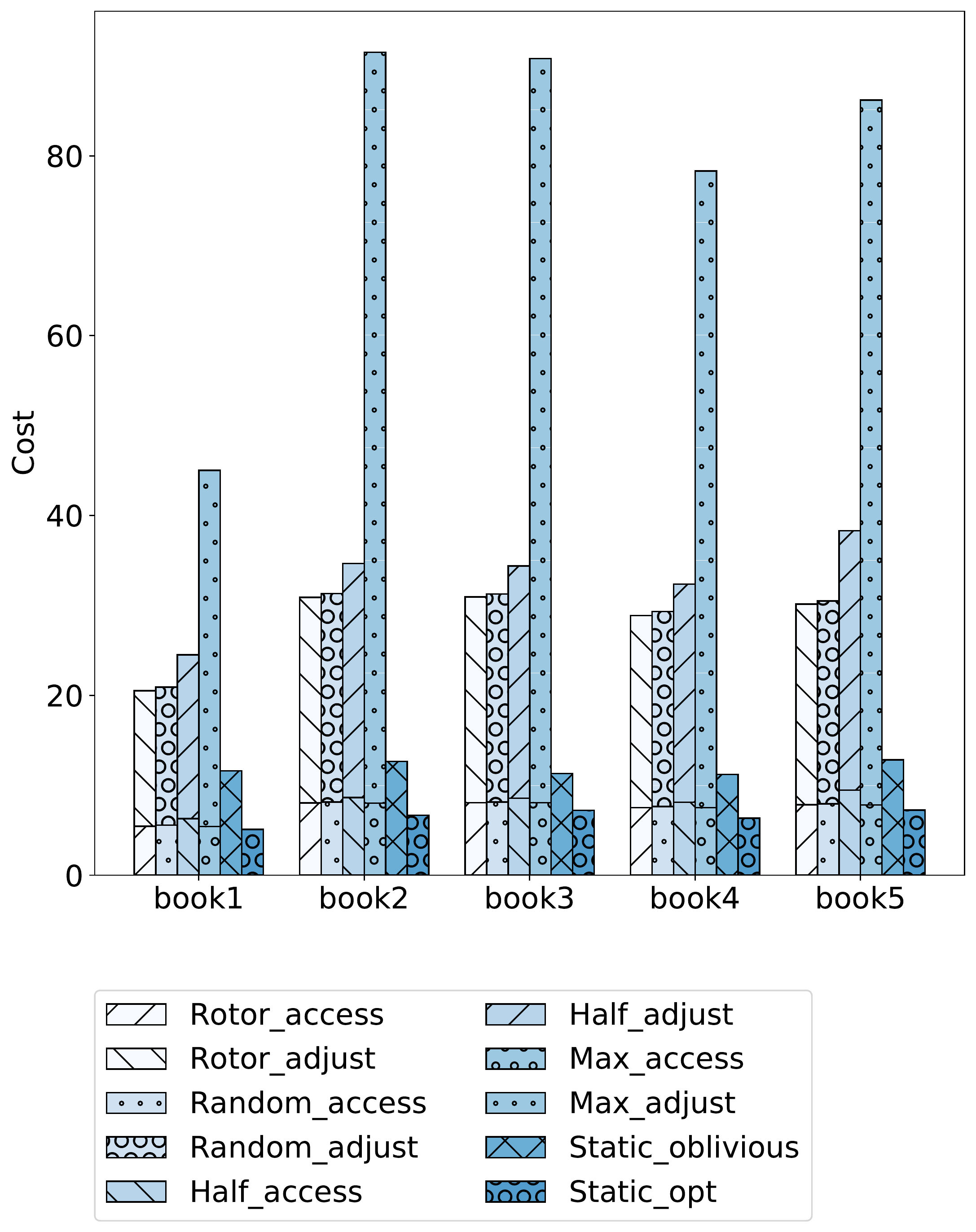}
        \caption{Q\ref{q:real}: Performance of the corpus data.}
        \label{fig:corpusdataplot}
    \end{minipage}
\end{figure}




\noindent \textbf{Q\ref{q:real}: Evaluation with corpus data.}
\iosif{
The complexity map computation \cite{10.1145/3379486} of the five datasets showed that their temporal complexity is in the interval $[0.3, 0.5]$ and their non-temporal complexity is in the interval $[0.8, 1]$ (Figure \ref{fig:complmap}). This plot indicates that the datasets have moderate to high locality.
In Figure \ref{fig:corpusdataplot} we plotted the performance of all six algorithms over these datasets.
As in the synthetic data, we observe that 
(i) \DetLong\ and \RandLong are the best self-adjusting algorithms with similar performance,
(ii) the access cost of \DetLong, \RandLong, and \movehalf\ is similar to the one of \statopt, and that
(iii) the selected dataset doesn't have high locality and hence the adjustment cost remains high.
}

\subsection{Discussion}
\label{sec:evaldiscussion}

\iosif{
We discuss the main takeaways of our evaluation. From the plots that address Q\ref{q:nwsize} we derived that in high locality scenarios, self-adjusting algorithms perform better as the network size increases, since the access cost for static algorithms increases as well (i.e. the tree size increases).
We then fixed the tree size to 65,535 nodes (depth 15) and observed that the cost of adjustment pays off in high locality sequences (temporal, spatial, or combined).
We observed that \DetLong\ and \RandLong\ have almost identical performance, both in synthetic and real data, despite their different properties. Recall that \RandLong has the working set property \cite{avin2021dynamically}, but \DetLong\ doesn't (cf. Section \ref{sec:lack}, Lemma \ref{lem:rotorWS}).
Specifically, even though the cost of \DetLong\ can be linear in the working set in theory, we did not observe this in any of the tested scenarios.  
Also, we found that the performance of all algorithms over corpus data follows the one observed with synthetic data.
}

\section{Future Work}
Our paper leaves open several interesting directions for future research.
On the theoretical front, it would be interesting to provide tight constant bounds on 
the competitive ratio of our algorithm and the problem in general.
On the applied front, it remains to engineer our algorithms further to
improve performance in practical applications, potentially also supporting concurrency.


\end{document}